\documentclass[12pt,a4paper]{article}
\usepackage[latin1]{inputenc}
\usepackage{amsmath}
\usepackage{amsthm}
\usepackage{amsfonts}
\usepackage{amssymb}
\usepackage[pagebackref,colorlinks=true,citecolor=blue,linkcolor=blue,urlcolor=blue]{hyperref}
\usepackage{cleveref}
\usepackage{graphicx}
\usepackage{nicefrac}
\usepackage{fullpage}
\usepackage{bm}
\usepackage{xcolor}
\usepackage[linesnumbered,ruled]{algorithm2e}
\usepackage{tikz}
\usepackage{mathtools}
\usepackage[all]{xy}
\usepackage{authblk}
\usepackage{subcaption} 
\usepackage{tikz}

\usepackage{comment}

\usetikzlibrary{calc,positioning}

\newtheorem{theorem}{Theorem}
\newtheorem{proposition}[theorem]{Proposition}
\newtheorem{lemma}[theorem]{Lemma}
\newtheorem{remark}[theorem]{Remark}
\newtheorem{corollary}[theorem]{Corollary}

\newtheorem{definition}[theorem]{Definition}

\newtheorem{example}[theorem]{Example}

\DeclareMathOperator{\linspan}{Span}

\newcommand{\ef}{\mathbb{F}}

\newcommand{\efq}{\ef_q}
\newcommand{\efqm}{\ef_{q^m}}

\newcommand{\bs}{\boldsymbol}

\newcommand{\rs}{\mathrm{RS}}

\newcommand{\isom}{\stackrel{\sim}{\to}}

\newcommand{\aaa}{1.2cm}   
\newcommand{\bbb}{0.8cm}   

\newcommand{\row}{\mathrm{Row}}
\newcommand{\col}{\mathrm{Col}}

\newcommand{\df}{\mathfrak{n}}

\title{Codes with Large Minimum Distance in Product Codes:
Explicit Constructions and Bounds}    

\author[1]{Amit Berman}
\author[1]{Yaron Shany}
\author[2,1]{Itzhak Tamo}
\affil[1]{Samsung  Semiconductor Israel R\&D Center, 146 Derech
Menachem Begin St., Tel Aviv 6492103, Israel. Emails: \{amit.berman,
yaron.shany\}@samsung.com} 
\affil[2]{Department of Electrical Engineering-Systems, Tel Aviv
University, Tel Aviv 6997801, Israel. Email: zactamo@gmail.com}

\date{\today}

\begin{document}

\maketitle

\begin{abstract}
Products of MDS codes are of major practical importance; for a recent example, they are used in Data Availability Sampling (DAS) in blockchain networks such as Celestia and as part of the Ethereum roadmap. This motivates us to consider subcodes of such codes with the goal of obtaining a larger minimum distance. In this paper, we present explicit constructions of subcodes of Reed--Solomon product codes, along with bounds on their minimum distance. In particular, they achieve an optimal or near-optimal dimension--distance tradeoff. For component codes of dimension $r$, our construction requires a field whose size is bounded linearly by the overall product code length, and attains the maximum possible minimum distance for subcode dimensions $r^2-1$, $r^2-2$, and all dimensions at most $2r-1$. Furthermore, we establish a new upper bound on the minimum distance of subcodes of the product of two codes with identical parameters. 
\end{abstract}

\section{Introduction}\label{sec:intro}

Product codes are a classical and versatile family of error-correcting codes. Introduced by Elias~\cite{Elias54}, they are obtained by arranging information in a two-dimensional array and encoding the rows and columns separately. Their appeal lies in the strong structural properties inherited from this tensor construction: rows and columns remain codewords of the component codes, which supports efficient local repair while also providing nontrivial global distance. These features make product codes attractive in a variety of settings where one seeks to balance local recoverability and global reliability.

A recent and particularly relevant application arises in blockchain systems based on sharding and rollups, where the \emph{data-availability} (DA) problem has become a central challenge. Light clients should be able to verify that the data needed to reconstruct a block has actually been published, without downloading the full block. Data-availability sampling (DAS) addresses this by having clients sample only a small number of encoded symbols rather than the entire block~\cite{AlBassam2018,NazirkhanovaNT22}. In a typical two-dimensional DAS scheme, the data is arranged in an $r\times r$ array and encoded by a product code, typically the tensor product of two Reed--Solomon (RS) codes, producing an $n\times n$ array. In this context, the row and column structure of product codes supports both local repair and probabilistic availability checks. Recent analyses of RS-based DAS protocols also underscore the practical relevance of the underlying rate, distance, and field-size tradeoffs~\cite{paolo}.
However, the main limitation of the ambient product code remains its global minimum distance. If $C_1$ and $C_2$ have minimum distances $d_1$ and $d_2$, then their product $C_1\otimes C_2$ has minimum distance $d_1d_2$; equivalently, there exist erasure patterns of size $d_1d_2$ that are not uniquely recoverable. This motivates the search for subcodes with larger global distance while preserving the same row and column structure. Put differently, we seek to add a small number of global constraints (``heavy parities'') to an RS product code so as to improve the dimension--distance tradeoff without sacrificing the structural features that make product codes attractive for DAS.

The study of subcodes of product codes with enhanced erasure resilience is closely related to the literature on \emph{maximally recoverable} (MR) codes for grid-like topologies~\cite{Gopalan2017,Kong2021,Brakensiek2025}. Grid MR codes are designed to correct all erasure patterns that are information-theoretically recoverable given the row and column constraints. While they offer optimal erasure resilience, explicit constructions of MR grid codes typically require field sizes that grow exponentially with the grid dimensions. In contrast, our focus is on obtaining explicit constructions with guaranteed large global minimum distances over small, practically viable finite fields.

To study this tradeoff, we view the problem through the framework of \emph{locally recoverable codes} (LRCs) with availability. Recall that a code has local recovery if an erased symbol can be reconstructed by accessing only a small subset of other symbols, called a \emph{recovering set}. If each symbol has several disjoint recovering sets, the code is said to have availability. Such codes have been studied extensively from both the bounds and construction perspectives~\cite{Tamo2014,TamoBF16,kathy}.

In our setting, the component codes are \emph{Maximum Distance Separable} (MDS) codes, such as the RS codes considered in this work. An MDS code of length $n$ and dimension $r$ achieves the Singleton bound with equality, namely minimum distance $\delta=n-r+1$. Therefore, in a 2D product code obtained from two $[n,r,\delta]$ MDS codes, each symbol lies in both a row codeword and a column codeword, yielding two disjoint recovering sets. More precisely, each coordinate belongs to two local $[n,r,\delta]$ MDS constraints, one from its row and one from its column. Consequently, a single erasure can be recovered from any $r$ surviving symbols in either of these two sets. In the terminology of locally recoverable codes, this means that the code has all-symbol locality $r$ with local distance $\delta$, or equivalently all-symbol $(r,\delta)$-locality.

Since any such subcode inherits this all-symbol $(r,\delta)$-locality, its minimum distance $d$ satisfies the standard Singleton-type upper bound for LRCs with $(r,\delta)$-locality~\cite{Kamath2014}. For a subcode of dimension $k \le r^2$, this bound is given by
\begin{equation}\label{eq:1dlrc}
d\leq n^2-k+1-\left\lfloor \frac{k-1}{r} \right\rfloor (\delta-1).
\end{equation}

While~\eqref{eq:1dlrc} provides a fundamental baseline, it leaves significant room for improvement because it only accounts for a single local recovery option. Exploiting the fact that product codes provide \emph{two} disjoint recovering sets for each symbol should yield stronger upper bounds on the maximum possible minimum distance. Existing upper bounds for LRCs with multiple recovering sets, however, have limitations when applied to our setting. For the case of $\delta=2$, Wang and Zhang~\cite{WZ14,Wang2014} obtained a sharp upper bound on the minimum distance of subcodes of the product of two $[r+1,r]$ codes, alongside an explicit construction. However, their construction requires a field size that is exponential in the product code length. 

For arbitrary local distances $\delta \ge 3$, more recent works have generalized these bounds~\cite{Cai2020}. Yet, because these bounds are derived for general LRCs, they do not fully capture the rigid intersecting grid topology of product codes and remain notably loose in the high-rate regime. In light of the sharp bounds already existing for $\delta=2$, and the remaining gap for higher local distances, this paper restricts its attention to the case $\delta \ge 3$.

\paragraph{Our Contributions.}
In this paper, we address both the theoretical bounds and the explicit construction of such codes. Our main contributions are as follows:
\begin{itemize}
    \item \textbf{A New Upper Bound:} We establish a new upper bound on the minimum distance of subcodes of the product of two codes with identical parameters. In contrast to existing general bounds for LRCs with two recovering sets per symbol~\cite{Wang2014,Cai2020}, our bound explicitly captures the grid structure, and is valid for an arbitrary local distance $\delta$.
    
    \item \textbf{Explicit Constructions over Small Fields:} We present a new explicit family of subcodes of RS product codes. A key feature of our construction is a mathematical technique that maps bivariate polynomial evaluations to univariate polynomials evaluated over the direct sum of roots of linearized polynomials. This allows the construction to operate over a finite field whose size is bounded linearly by the overall product code length (e.g., requiring a field of size at least  $ n^2$ for an $n \times n$ product code), which is highly desirable for practical implementations.
    
    \item \textbf{Strict Optimality:} We prove that our constructed subcodes achieve an optimal or near-optimal dimension--distance tradeoff. Specifically, for component codes of dimension $r$, our codes attain the maximum possible minimum distance for subcode dimensions $r^2-1$ (one heavy parity) and $r^2-2$ (two heavy parities), exactly matching our new upper bound. Furthermore, for all dimensions at most $ 2r-1$, they perfectly meet the Kamath \textit{et al.}\ bound~\eqref{eq:1dlrc}.
\end{itemize}

\paragraph{Paper Organization.}
The remainder of the paper is organized as follows. Section~\ref{sec:prelim} provides preliminaries and notation. Section~\ref{sec:codes} introduces our explicit univariate construction of subcodes of RS product codes. In Section~\ref{sec:bound}, we derive our new general upper bound on the minimum distance. Section~\ref{sec:5} proves the minimum distance lower bounds for our construction and establishes its strict optimality for specific dimensions.  Sections~\ref{sec:inst} and~\ref{sec:eg} provide concrete instantiations and numerical examples comparing our bounds. Finally, in Section~\ref{sec:conc}, we conclude and discuss open questions.

\section{Preliminaries}\label{sec:prelim}
This section includes some definitions and notation used throughout
the paper.

For a prime power $q$, let $\efq$ be the finite field of $q$
elements. Unless otherwise noted, all vectors are row vectors. We use
$(\cdot)^T$ for matrix transposition. 
For a matrix $x\in \efq^{n_1\times n_2}$ (where $n_1,n_2$ are positive integers), we write $\row_i(x)\in
\efq^{1\times n_2}$ for the $i$-th row of $x$, and $\col_j(x)\in
\efq^{n_1\times 1}$ for the $j$-th column of $x$, where
$i\in\{1,\ldots,n_1\}$, $j\in\{1,\ldots,n_2\}$. 
For a polynomial $f\in K[x]$ (for some field $K$), we write $Z_f$ for
the set of zeros (i.e., roots) of $f$ in its splitting field. For
finite sets $X,Y$, we write $Y^X$ for the set of all functions 
$X\to Y$. Finally, we use the notation $[n,k,d]_q$  for an
$\efq$-linear code with length $n$, dimension $k$, and minimum
distance $d$.\footnote{Throughout, ``distance'' and ``weight'' refer to
Hamming distance and Hamming weight, respectively.} When $q$ is clear
from the context, we omit it and write simply $[n,k,d]$ for short. If $F$ is a field and $S\subseteq F$ is a finite subset, the {\bf annihilator polynomial} of $S$ is $\prod_{\alpha\in S} (x-\alpha)\in F[x]$.

\subsection{Tensor product of codes}\label{sec:prod}
For two linear codes $C_1\subseteq \efq^{n_1}, C_2\subseteq \efq^{n_2}$
(where $n_1,n_2$ are positive integers), we let $C_1\otimes
C_2\subseteq \efq^{n_1\times n_2}$ be the set of all matrices $c$
with $\col_j(c)^T\in C_1$ and $\row_i(c)\in C_2$ for all $i,j$. A
different, yet equivalent definition, is that $C_1\otimes C_2$ is the
subspace of $\efq^{n_1\times n_2}$ generated by all {\bf
simple tensors}, $\bs{c}_1\otimes \bs{c}_2:=\bs{c}_1^T\bs{c}_2$, where
$\bs{c}_1\in C_1$, $\bs{c}_2\in C_2$. The code $C_1\otimes C_2$ is
called the {\bf (tensor) product code} of $C_1$ and $C_2$.

It is well known that if $d_1,d_2$ are the minimum
distances of
$C_1,C_2$ (resp.), then the minimum distance of $C_1\otimes C_2$ is
$d_1d_2$. Also, writing $k_i:=\dim(C_i)$ for $i=1,2$, if
$B^{(1)}=\{\bs{b}^{(1)}_1,\ldots,\bs{b}^{(1)}_{k_1}\}$ and
$B^{(2)}=\{\bs{b}^{(2)}_1,\ldots,\bs{b}^{(2)}_{k_2}\}$ are bases for
$C_1,C_2$ (resp.), then $\{\bs{b}^{(1)}_i\otimes
\bs{b}^{(2)}_j\}_{i,j}$ is a basis for $C_1\otimes C_2$. Consequently,
$\dim(C_1\otimes C_2)=\dim(C_1)\dim(C_2)$.

\subsection{Reed--Solomon codes}
For integers $k\geq 0$ and $1\leq n\leq q$ with $n\geq k$, and for a
vector 
$\bs{a}=(a_1,\ldots,a_n)$ of distinct 
elements from $\efq$, the {\bf Reed--Solomon (RS) code} $\rs_q(k,\bs{a})$ of
dimension $k$ and {\bf evaluation vector} $\bs{a}$ is defined as
\begin{equation}\label{eq:RS}
\rs_q(k,\bs{a}):=\big\{(f(a_1),f(a_2),\ldots,f(a_n))\big|f\in \efq[x],
\deg(f)<k\big\}.
\end{equation}
When the underlying finite field is clear from the context, we omit the
subscript $q$, and write simply $\rs(k,\bs{a})$ for
$\rs_q(k,\bs{a})$. It is easily verified that $\rs_q(k,\bs{a})$ has
dimension $k$ and minimum distance $n-k+1$, and it is therefore an MDS
code.

With the above notation, let $A:=\{a_1,\ldots,a_n\}$. It is sometimes
more convenient to avoid element indexing, and to slightly modify the
definition of \eqref{eq:RS} as
$$
\rs_q(k,A):=\big\{f\colon A\to \efq\big|f\text{ is a polynomial
function over $\efq$ of degree }<k\big\}.
$$

The tensor product of two RS
codes can be described as all evaluation matrices of an
appropriate set of bivariate polynomials. In detail, letting also
$\bs{b}=(b_1,\ldots,b_{n'})$ be a vector of distinct elements from $\efq$
(where $n'>0$ is an integer) and $0\leq k'\leq n'$ be an integer, 
it follows from the discussion at the end of Section \ref{sec:prod}
that the following is a basis for $\rs_q(k,\bs{a})\otimes 
\rs_q(k',\bs{b})$:
\begin{equation}\label{eq:prodbasis}
\Big\{(a_i^s b_j^t)_{1\leq i\leq n, 1\leq j\leq n'}|0\leq s<k, 0\leq t<k'\Big\}. 
\end{equation}
Consequently, 
\begin{equation}\label{eq:rsprod}
\rs_q(k,\bs{a})\otimes \rs_q(k',\bs{b})=\Big\{f(a_i,b_j)_{1\leq i\leq
n, 1\leq j\leq n'}|f\in \efq[x,y],
\deg_x(f)<k, \deg_y(f)<k'\Big\}.
\end{equation}

Considering \eqref{eq:rsprod}, it is useful to have the following
definition.

\begin{definition}\label{def:polys}
For non-negative integers $r_1,r_2$ and for a field $F$, let
$$
F[x,y]^{(r_1,r_2)}:=\{f\in F[x,y]|\deg_x(f)<r_1, \deg_y(f)<r_2\}.
$$
\end{definition}

\subsection{Linearized polynomials}
For a non-negative integer $n$, a polynomial of the form
$f(x)=a_nx^{q^n}+a_{n-1}x^{q^{n-1}}+\cdots +a_0 x\in \efqm[x]$ (where $m$ is
a positive integer) is called a {\bf $q$-linearized polynomial} over
$\efqm$. Such a polynomial $f$ clearly defines an $\efq$-linear map, and
hence $Z_f$ is an $\efq$-vector space of dimension at most $n$. Also,
if $F$ is a finite extension of $\efq$ and  
$V\subseteq F$ is an $\efq$-vector space, then the annihilator polynomial of $V$  is a $q$-linearized polynomial over $F$ (see, e.g.,
\cite[Thm.~3.52]{LN}). Finally, 
the formal derivative $f'$ of $f$ equals $a_0$, and therefore $f$ is
separable iff $a_0\neq 0$. Hence, when $a_0\neq 0$, $\dim(Z_f)=n$. 

\section{An explicit construction}\label{sec:codes}
In this section, we present an explicit family of subcodes of the
product of two RS codes.  A key feature is that, as opposed to the
description in \eqref{eq:rsprod}, codewords of the product code are
described as evaluation vectors of \emph{univariate}
polynomials. Following a careful degree analysis, we define the
subcodes by limiting the degree of the evaluated univariate
polynomials. 

Throughout, we fix a prime power $q$. Let $f(x)=\sum_{i}a_ix^{q^i}$ be
a $q$-linearized polynomial over some finite extension $K$ of $\efq$,
and assume that $a_0\neq 0,1$ (this requires $|K|>2$) and that
$\deg(f)>1$. Define $g:=x-f\in K[x]$. It follows from the restriction on $a_0$ that both $f$ and $g$ are separable $q$-linearized polynomials. Let $F=\efqm$ be the splitting field of
$\{f,g\}$ (this defines $m$). By definition,
for any non-zero $\beta\in Z_f$, we have
$g(\beta)=\beta-f(\beta)=\beta \neq 0$. Hence
$Z_f\cap Z_g=\{0\}$, and the sum $Z_f+Z_g=Z_f\oplus Z_g\subseteq F$
is direct. We note that in Section \ref{sec:inst} below, we will
describe some simple concrete examples with $Z_f\oplus
Z_g=F=\ef_{q^2}$. 

As already mentioned, central to the definition and distance analysis of the codes constructed below is the presentation of certain product codes as evaluation codes of univariate polynomial spaces. The following proposition is the main step in this direction, where the commutative diagram in the proposition serves as a sort of dictionary for moving between the familiar bivariate presentation and the more useful univariate presentation.

From this point on, we let $\df:=\deg(f)=\deg(g)$ (this is a power of $q$), and  let $r\leq \df$ be a non-negative integer. Define $B_{r^2}:=\{g^if^j|0\leq i,j\leq
r-1\}$.

\begin{proposition} \label{prop:diagram}
Let $\psi\colon Z_f\oplus Z_g\isom Z_f\times Z_g$ be the
bijection $\beta+\gamma\mapsto (\beta,\gamma)$. Then
the 
following diagram of $F$-vector spaces and $F$-linear maps is commutative.
\begin{equation}\label{eq:diagram}
\xymatrix{
F[x,y]^{(r,r)} \ar[rr]^(0.4){x^iy^j\mapsto g^if^j}\ar[dd]_{\mathrm{eval}} &&
\linspan_{F}(B_{r^2})\subset F[x]\ar[dd]_{\mathrm{eval}}\\
\\
{F}^{Z_f\times Z_g} \ar[rr]_{\sim}^{(c\colon Z_f\times Z_g\to F)\mapsto
c\circ\psi} && {F}^{Z_f\oplus Z_g}
}
\end{equation}
Here, in the top horizontal map, the linear map is defined by its
values on a basis, and the vertical maps are evaluation maps taking
a polynomial $u(x,y)$ (resp., $v(x)$) to its evaluation vector
$(\beta,\gamma)\mapsto u(\beta,\gamma)$ (resp., $\alpha\mapsto v(\alpha)$).
\end{proposition}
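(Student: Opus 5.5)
The plan is to check the diagram on the basis $\{x^iy^j\}_{0\le i,j\le r-1}$ of $F[x,y]^{(r,r)}$. All four maps are $F$-linear, so the two composites agree everywhere once they agree on this basis. Before that, I will record the facts that make the maps well defined. The bottom map, precomposition with the bijection $\psi$, is $F$-linear. It is an isomorphism whose inverse is precomposition with $\psi^{-1}$, and $\psi$ is a bijection because the sum $Z_f+Z_g$ is direct, as shown above. The top map is defined by prescribing the images of a basis, so it is linear. Its image is $\linspan_F(B_{r^2})$ by construction. The evaluation maps are clearly linear.

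The key step is to compute the values of $f$ and $g$ at a point $\alpha=\beta+\gamma$ with $\beta\in Z_f$ and $\gamma\in Z_g$. Since $f$ is $q$-linearized, it is additive, and $f(\beta)=0$. Hence
\[
f(\beta+\gamma)=f(\gamma).
\]
Because $g(\gamma)=0$, we have $\gamma=f(\gamma)$, and therefore $f(\beta+\gamma)=\gamma$. The same argument applies to $g=x-f$, which is also additive. We get $g(\beta+\gamma)=g(\beta)=\beta-f(\beta)=\beta$. So on $Z_f\oplus Z_g$ the polynomial functions $g$ and $f$ act as the two coordinate projections:
\[
g(\alpha)=\beta, \qquad f(\alpha)=\gamma, \qquad \text{i.e.}\quad (g(\alpha),f(\alpha))=\psi(\alpha).
\]

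Now I chase the basis element $x^iy^j$ around the diagram. Going down and then right, evaluation gives the function $(\beta,\gamma)\mapsto\beta^i\gamma^j$ on $Z_f\times Z_g$. Composing with $\psi$ gives $\alpha\mapsto\beta^i\gamma^j$, where $\alpha=\beta+\gamma$. Going right and then down, we get $g^if^j$, and its evaluation is $\alpha\mapsto g(\alpha)^if(\alpha)^j=\beta^i\gamma^j$ by the key step. The two paths agree on every basis element, which proves commutativity.

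The main obstacle is small: correctly pairing $x$ with $g$ (the $Z_f$-coordinate $\beta$) and $y$ with $f$ (the $Z_g$-coordinate $\gamma$). This is handled by the identities $g|_{Z_f\oplus Z_g}=\mathrm{pr}_1\circ\psi$ and $f|_{Z_f\oplus Z_g}=\mathrm{pr}_2\circ\psi$ derived above. No degree bound on $r$ is needed for commutativity itself.
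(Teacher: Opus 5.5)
Your proposal is correct and matches the paper's proof: you use additivity of the linearized polynomials together with $f+g=x$ to get $g(\beta+\gamma)=\beta$ and $f(\beta+\gamma)=\gamma$, and then chase the basis element $x^iy^j$ both ways around the square. Your extra remarks, that $\psi$ is well defined because the sum is direct and that the constraint $r\le\df$ is not needed for commutativity, are accurate.
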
 

\begin{proof}
We note that for $(\beta,\gamma)\in Z_f\times Z_g$, it holds that
$g(\beta+\gamma)=g(\beta)$, since $g$ is linearized and $\gamma\in Z_g$. Also, from $f+g=x$, we have
$g(\beta)=(f+g)(\beta)=\beta$. Hence
$g(\beta+\gamma)=\beta$. Similarly, $f(\beta+\gamma)=\gamma$.

Now commutativity can be verified directly by
chasing a basis element at the top left corner, as follows:
$$
\xymatrix{
x^iy^j\ar@{|->}[r]\ar@{|->}[d] & g^if^j \ar@{|->}[d] \\
((\beta,\gamma)\mapsto \beta^i\gamma^j)\ar@{|->}[r] &
(\beta+\gamma\mapsto \beta^i\gamma^j)
}
$$
\end{proof}

\begin{remark}\label{rem:relabel}
{\rm
\begin{enumerate}

\item The top horizontal map in \eqref{eq:diagram} can be written
explicitly as $s(x,y)\mapsto s(g(x),f(x))$, for $s(x,y)\in F[x,y]^{(r,r)}$. 

\item The bottom horizontal map in \eqref{eq:diagram}
is just a re-labeling of the coordinates of vectors:\footnote{Here, a matrix is considered as a vector indexed by the Cartesian product of two finite sets.} we re-label
coordinate  $(\beta,\gamma)$ by $\beta+\gamma$. 

\item Recalling \eqref{eq:rsprod}, the image of the left vertical map
in \eqref{eq:diagram} is $\rs_{q^m}(r,Z_f)\otimes\rs_{q^m}(r,Z_g)$.

\item The top horizontal map in \eqref{eq:diagram} is evidently
surjective, as it hits $B_{r^2}$ by definition.

\end{enumerate}
}
\end{remark}

\begin{corollary}\label{coro:codedef}
The set $B_{r^2}$ is linearly independent over $F$, and the evaluation
code of $\linspan_F(B_{r^2})$ on $Z_f\oplus Z_g$ (image of right vertical
map in \eqref{eq:diagram}) is the 
product code $\rs_{q^m}(r,Z_f)\otimes\rs_{q^m}(r,Z_g)$, up to a
relabeling of the coordinates. Also, the top horizontal map in
\eqref{eq:diagram} is an isomorphism.
\end{corollary}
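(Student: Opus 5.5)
The plan is to get everything from the commutative diagram \eqref{eq:diagram} together with a dimension count. The key point is that the composite ``left vertical map followed by bottom horizontal map'' is injective. Once that is in place, commutativity forces the top horizontal map to be injective, and the remaining claims follow.

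First, I show the left vertical map $\mathrm{eval}\colon F[x,y]^{(r,r)}\to F^{Z_f\times Z_g}$ is injective. Since $f$ and $g$ are separable $q$-linearized polynomials of degree $\df$, we have $|Z_f|=|Z_g|=\df\geq r$. Suppose $u(x,y)=\sum_{j<r}u_j(x)y^j$ with $\deg u_j<r$ vanishes on $Z_f\times Z_g$. Fix $\beta\in Z_f$. Then $u(\beta,y)$ has degree $<r\le |Z_g|$ and vanishes on all of $Z_g$, so every $u_j(\beta)=0$. Each $u_j$ therefore has degree $<r$ and vanishes on $Z_f$, so $u_j=0$. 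Equivalently, by \eqref{eq:prodbasis} the image is $\rs_{q^m}(r,Z_f)\otimes\rs_{q^m}(r,Z_g)$, which has dimension $r^2=\dim F[x,y]^{(r,r)}$. The bottom horizontal map is a bijective relabeling by Remark~\ref{rem:relabel}, so the composite $F[x,y]^{(r,r)}\to F^{Z_f\oplus Z_g}$ is injective.

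By commutativity of \eqref{eq:diagram}, this composite equals the top horizontal map followed by the right vertical evaluation. Hence the top horizontal map is injective. It is surjective by Remark~\ref{rem:relabel}(4), so it is an isomorphism. Because it sends the basis $\{x^iy^j\}_{0\le i,j<r}$, which has $r^2$ elements, bijectively onto $B_{r^2}$, the image $B_{r^2}$ consists of $r^2$ distinct elements and is linearly independent over $F$. Finally, the image of the right vertical map applied to $\linspan_F(B_{r^2})$ equals the image of the composite, again by commutativity and surjectivity of the top map. This image is the relabeling of $\rs_{q^m}(r,Z_f)\otimes\rs_{q^m}(r,Z_g)$ via Remark~\ref{rem:relabel}(2),(3).

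The only step with any content is the injectivity of the bivariate evaluation, and it reduces to $r\le\df=|Z_f|=|Z_g|$. I expect no real obstacle beyond checking that separability gives exactly $\df$ roots; this holds because $a_0\neq0$ for $f$ and $1-a_0\neq0$ for $g$.
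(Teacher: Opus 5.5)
Your proof is correct and follows essentially the paper's route: a diagram chase through \eqref{eq:diagram}, combined with the fact that the bivariate evaluation onto $\rs_{q^m}(r,Z_f)\otimes\rs_{q^m}(r,Z_g)$ has full dimension $r^2$. The differences are cosmetic: you prove injectivity of the left vertical map directly and get injectivity of the top map first. The paper instead compares the image dimension $r^2$ with $\dim\linspan_F(B_{r^2})\le r^2$, which gives linear independence of $B_{r^2}$ first.
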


\begin{proof}
As already mentioned, the image of the left vertical map is
$\rs_{q^m}(r,Z_f)\otimes\rs_{q^m}(r,Z_g)$, and the bottom horizontal
map is just a re-labeling of the coordinates. Hence, using the
commutativity of the diagram, the image of the composition of the top horizontal and right vertical maps
is $\rs_{q^m}(r,Z_f)\otimes\rs_{q^m}(r,Z_g)$, up to a relabeling of
the coordinates. This is also the image of the right vertical map,
by surjectivity of the top horizontal map. Since this image is the
product of RS codes, it has
dimension $r^2=|B_{r^2}|\geq \dim(\linspan_F(B_{r^2}))$. As the dimension of the image cannot be larger than that of the domain, it follows that
$\dim(\linspan_F(B_{r^2}))=r^2$, and $B_{r^2}$ is linearly
independent. Finally, the top horizontal arrow is an isomorphism,
because it takes a basis to a basis. 
\end{proof}

Since the codes we construct are subcodes of the evaluation code from
Corollary \ref{coro:codedef}, the following definition will be useful.

\begin{definition}\label{def:uppercode}
Let $C_{r^2}$ be the image of the right vertical map of
\eqref{eq:diagram}. In other words, $C_{r^2}$ is the evaluation code
of $\linspan_F(B_{r^2})$ on $Z_f\oplus Z_g$. By Corollary
\ref{coro:codedef}, this is effectively
$\rs_{q^m}(r,Z_f)\otimes\rs_{q^m}(r,Z_g)$.   
\end{definition}
We note that $C_{r^2}$ has length
$\df^2$, dimension $r^2$, and minimum distance $(\df-r+1)^2$.

In $B_{r^2}$, there are distinct polynomials with identical
degrees. To continue, we would like to specify the degrees of a basis
for $\linspan_F(B_{r^2})$ consisting of polynomials of distinct
degrees. Such a basis exists by linear algebra: if we replace
polynomials of $B_{r^2}$ by their vectors of coefficients in
descending degree order and take the matrix with these vectors as
rows, we are looking for a row echelon form (REF). We note that while
the REF is not unique, the resulting set of distinct degrees
\emph{is} unique: these are exactly the distinct degrees in
$\linspan_F(B_{r^2})$. The degrees in such a
basis are described in Theorem \ref{thm:max-deg-basis}. 
\begin{theorem}[Degrees for a basis in REF]
\label{thm:max-deg-basis}
Let $D$ be the set of degrees of the polynomials for a basis of
$\linspan_F(B_{r^2})$ in REF. Then
\begin{equation}\label{eq:d1d2}
D =\Big\{t\df+\ell \Big|0\leq t\leq 2(r-1),\ \ 0\leq \ell\leq
r-1-\left\lceil \frac{t}{2} \right\rceil\Big\}.
\end{equation}
\end{theorem}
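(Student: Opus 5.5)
The plan is to push everything into a two-variable picture in which degrees can be read off monomial by monomial. Since $f+g=x$, we have $g^if^j=(x-f)^if^j$. So $\linspan_F(B_{r^2})$ is exactly the set of polynomials $P(x,f(x))$, where $P(u,v)=s(u-v,v)$ and $s$ ranges over $F[x,y]^{(r,r)}$. Write $T\colon s(x,y)\mapsto s(x-y,y)$ for this substitution.

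\textbf{Step 1 (degrees of monomials).} Expanding $P(u,v)=s(u-v,v)$ gives only monomials $u^kv^t$ with $k\le r-1<\df$ (the exponent of $u$ never exceeds $\deg_x s$). For such a monomial, $\deg(x^kf^t)=t\df+k$, and distinct pairs $(k,t)$ with $k<\df$ give distinct degrees. Hence there is no cancellation, and $\deg P(x,f(x))$ equals the largest value of $t\df+k$ over the monomials of $P$. This weight order is the lexicographic order on $(t,k)$ with $t$ dominant. Consequently $D=\{t\df+\ell\}$, where $u^\ell v^t$ runs over the lex-leading monomials of nonzero elements of the image of $T$. Since the image of $T$ has dimension $r^2$, exactly $r^2$ distinct leading monomials occur.

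\textbf{Step 2 (reduce to homogeneous pieces).} The map $T$ preserves total degree. Hence the image of $T$ is the direct sum over $e$ of $T(H_e)$, where $H_e=\linspan\{x^ay^{e-a}: L_e\le a\le U_e\}$ with $L_e=\max(0,e-r+1)$ and $U_e=\min(e,r-1)$. This space has dimension $N_e=r-|e-(r-1)|$ for $0\le e\le 2r-2$. Homogeneous pieces of different total degree have disjoint monomials, so the leading monomial of a sum is the maximum of the leading monomials of its pieces. Thus the set of leading monomials is the union over $e$ of those coming from $T(H_e)$. For a form of degree $e$, the lex-leading monomial $u^\ell v^{e-\ell}$ has the least possible $\ell$. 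That $\ell$ is the order of vanishing at $u=0$ of the dehomogenization $P(u,1)=\sum_a c_a(u-1)^a$.

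\textbf{Step 3 (the key computation).} I expect this step to be the main obstacle. Substitute $w=u-1$. Then the dehomogenized polynomials $P(u,1)$ coming from $T(H_e)$ form exactly the space $w^{L_e}\cdot\{h: \deg h\le U_e-L_e\}$. Because $w^{L_e}$ does not vanish at $u=0$ (that is, at $w=-1$), the achievable vanishing orders at $u=0$ are exactly those of polynomials of degree at most $N_e-1$. These are $\ell=0,\dots,N_e-1$, all of which are attained, for example by $h=(w+1)^\ell$. Hence $T(H_e)$ contributes exactly the leading monomials $u^\ell v^{e-\ell}$ with $0\le \ell\le N_e-1$.

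\textbf{Step 4 (bookkeeping).} Put $t=e-\ell$. The condition is then $\ell<r-|t+\ell-r+1|$, and I check that it is equivalent to $0\le\ell\le r-1-\lceil t/2\rceil$.
\begin{itemize}
\item When $t+\ell\ge r-1$, the condition reads $2\ell\le 2r-2-t$, which is the same as $\ell\le r-1-\lceil t/2\rceil$.
\item When $t+\ell<r-1$, the condition $\ell<t+\ell+1$ holds automatically. In this case $\ell<r-1-t\le r-1-\lceil t/2\rceil$, so such $(t,\ell)$ also satisfy the claimed inequality.
\end{itemize}
Conversely, every $(t,\ell)$ satisfying the inequality falls into one of these two cases and is therefore attained. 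The range $0\le t\le 2(r-1)$ comes from $e\le 2r-2$. This yields \eqref{eq:d1d2}.

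As a sanity check, the count is $\sum_{t=0}^{2r-2}\bigl(r-\lceil t/2\rceil\bigr)=r+r(r-1)=r^2$, matching $\dim\linspan_F(B_{r^2})=r^2$ from Corollary \ref{coro:codedef}.
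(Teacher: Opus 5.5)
Your proof is correct. The outer structure matches the paper's: after substituting $(u,v)=(x,f)$, your pieces $T(H_e)$ are exactly the spans of the paper's slices $X_s=\{f^ig^j: i+j=s\}$. The key step, however, is done differently.

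The paper stays in $F[x]$ and expands each slice in the basis $Y_s=\{f^tx^{s-t}\}$. For $s\le r-1$ the two spans coincide. For $s\ge r$, the paper uses that every element of $\linspan_F(X_s)$ is divisible by $(f(x-f))^{s-r+1}$, hence has degree at least $2(s-r+1)\df$. This forces the block $M_2$ of the coefficient matrix to be invertible, and the REF degrees are read off $M_2^{-1}\bs{x}_s$. The paper then proves only $D\supseteq D_1\cup D_2$. It concludes with the cardinality count $|D|=r^2$, which relies on the linear independence from Corollary~\ref{coro:codedef}, together with a separate reindexing of $D_1\cup D_2$.

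You instead treat $x$ and $f$ as formal variables. The bound $\deg_u\le r-1<\df$ rules out cancellation, so univariate degree becomes a lex leading monomial; as a bonus, this gives the linear independence of $B_{r^2}$ without the evaluation argument. The slice computation then reduces to a one-variable fact: $w^{L_e}\cdot\{h:\deg h\le N_e-1\}$ realizes exactly the vanishing orders $0,\dots,N_e-1$ at $w=-1$. This pins down the leading monomials of each piece exactly and treats $e\le r-1$ and $e\ge r$ uniformly. It also yields equality directly, so your final count is only a consistency check.

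In effect, the paper's factor $(f(x-f))^{s-r+1}$ is your $(u-v)^{L_e}v^{L_e}$. The paper uses it only for the degree lower bound it provides. You additionally use that the cofactor ranges over all forms of degree $N_e-1$ (in dehomogenized form, all $h$ with $\deg h\le N_e-1$), which is what makes the direct computation possible.
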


For the proof, see Appendix \ref{app:proof}.

\begin{remark}\label{rem:modulo}
{\rm
The largest number in $D$ is $2(r-1)\df$, which is at least $\df^2$
if and only if $r\geq \df/2+1$.  Let $a(x)=\prod_{\alpha\in Z_f\oplus Z_g} 
(x-\alpha)$ be the annihilator polynomial of $Z_f\oplus Z_g$. Clearly,
the evaluation vectors on $Z_f\oplus Z_g$ of $h(x)\in F[x]$ and
$h(x)\bmod a(x)$ are the same. Hence, it is possible to replace the space
$\linspan_{F}(B_{r^2})$ in \eqref{eq:diagram} by the space of
polynomials reduced modulo $a(x)$ (and also replace the top horizontal
map to $s(x,y)\mapsto s(g,f)\bmod a(x)$). This will typically result
in different codes $C_k$ in Definition \ref{def:ck} below, and may perhaps
improve the lower bounds on the minimum distance derived below when
$r\geq \df/2+1$, but it seems 
complicated to extend the degree analysis of Theorem 
\ref{thm:max-deg-basis} to this setup.  We leave this as an open
question for future research.  
}
\end{remark}

We can now finally define the family of subcodes of 
$C_{r^2}=\rs_{q^m}(r,Z_f)\otimes\rs_{q^m}(r,Z_g)$:
\begin{definition}[The codes $C_k$]\label{def:ck}
Enumerate the set $D$ of Theorem \ref{thm:max-deg-basis} as 
$D=\{\partial_1,\partial_2,\ldots,\partial_{r^2}\}$ where
$\partial_1<\partial_2<\cdots<\partial_{r^2}$. For $k=1,\ldots,r^2$,
we define
$$
C_k:=\text{evaluation code of }\{w(x)\in \linspan_F(B_{r^2})|\deg(w)\leq
\partial_k\}\text{ on }Z_f\oplus Z_g,
$$
and let $d_k$ be the minimum distance of $C_k$. Note that by
construction, $\dim(C_k)=k$ for all $k$; a basis for 
$C_k$ can be taken as the set of evaluation vectors of the $k$ lowest
degree polynomials in a basis for $\linspan_F(B_{r^2})$ in REF. The
number $h$ of {\bf heavy parities} of $C_k$ is defined as the
difference $h:=r^2-k$.
\end{definition}
Note that for $k=r^2$, the code $C_k$ of Definition \ref{def:ck} is
indeed the entire code $C_{r^2}$ of Definition \ref{def:uppercode}.

For finding lower bounds on the minimum distance of the codes, it will
be useful to have an explicit description of the degrees
$\partial_k$.

\begin{proposition}\label{prop:degrees}
For integer $0\leq t \leq 2r-2$, let
\begin{equation}\label{eq:kt}
k_t=(t+1)r-\left\lfloor \frac{t+1}{2}\right \rfloor \cdot
\left\lceil\frac{t+1}{2} \right\rceil.
\end{equation} 
Then 
\begin{equation}\label{eq:degreekt}
\partial_{k_t}=t\df+r-1-\left\lfloor
\frac{t+1}{2}\right\rfloor.
\end{equation}
Also, with the convention that $k_{-1}:=0$, for $0\leq t \leq 2r-2$
and for $k_{t-1}<k<k_t$,
$\partial_{k}=\partial_{k_t}-(k_t-k)$.
\end{proposition}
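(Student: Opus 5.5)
This is a counting argument on the explicit set $D$ of Theorem \ref{thm:max-deg-basis}. We organize $D$ into ``blocks'' indexed by $t$: for $0\leq t\leq 2(r-1)$, let
\[
D_t:=\Big\{t\df+\ell \,\Big|\, 0\leq \ell\leq r-1-\left\lceil \tfrac{t}{2}\right\rceil\Big\}.
\]
This is a set of $r-\lceil t/2\rceil$ consecutive integers.

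The first step is to check that the blocks are strictly ordered: every element of $D_{t}$ is smaller than every element of $D_{t+1}$. This holds because the largest element of $D_t$ is at most $t\df+r-1<(t+1)\df$, using $r\leq\df$. Hence, in the increasing enumeration $\partial_1<\cdots<\partial_{r^2}$, the elements of $D_0$ come first, then those of $D_1$, and so on. Consequently, the index of the last element of $D_t$ is the cumulative count $\sum_{s=0}^{t}|D_s|=\sum_{s=0}^{t}(r-\lceil s/2\rceil)$. Within block $t$, the indices $k_{t-1}+1,\ldots,k_t$ correspond to consecutive degrees ending at $\max D_t$. This immediately gives the second claim, $\partial_k=\partial_{k_t}-(k_t-k)$ for $k_{t-1}<k<k_t$, once the first two claims are established.

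The second step is the arithmetic identity
\[
\sum_{s=0}^{t}\left\lceil \tfrac{s}{2}\right\rceil=\left\lfloor \tfrac{t+1}{2}\right\rfloor\left\lceil \tfrac{t+1}{2}\right\rceil .
\]
I would prove it by induction on $t$. The case $t=0$ gives $0=0\cdot1$. For the inductive step, split by the parity of $t$: writing $t+1=2u$ or $t+1=2u+1$, the right-hand side is $u^2$ or $u(u+1)$ respectively. Increasing $t$ by one adds $\lceil (t+1)/2\rceil$ to the left-hand side, and this matches the difference of consecutive values of the right-hand side ($u^2\to u(u+1)$ adds $u$, and $u(u+1)\to(u+1)^2$ adds $u+1$). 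This yields the formula for $k_t$ in \eqref{eq:kt}.

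The final step is the degree formula. We have $\max D_t=t\df+r-1-\lceil t/2\rceil$, and since $\lceil t/2\rceil=\lfloor (t+1)/2\rfloor$ for every integer $t$, this equals the right-hand side of \eqref{eq:degreekt}. For the last block, $t=2r-2$, we should also check that it is nonempty. Indeed $|D_{2r-2}|=r-(r-1)=1$, and in general $r-\lceil t/2\rceil\geq1$ throughout the range, so every block is nonempty and the statement is consistent with $k_{2r-2}=r^2$. There is no real obstacle. The only point needing care is the block-ordering inequality, which uses the hypothesis $r\leq\df$, together with the floor/ceiling bookkeeping in the identity.
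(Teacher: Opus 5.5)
Your proposal is correct and is essentially the paper's argument. The paper also writes $D$ as the union of the contiguous blocks $I_t$ (your $D_t$), obtains $k_t=\sum_{j\le t}|I_j|$ from Lemma~\ref{lemma:sum} (proving $\sum_{j=1}^{t}\lceil j/2\rceil=\lfloor\frac{t+1}{2}\rfloor\lceil\frac{t+1}{2}\rceil$ by a direct parity computation instead of your induction), and reads off $\partial_{k_t}=\max I_t$ and the in-block formula from contiguity. Your check that $\max I_t<(t+1)\df$ (using $r\le\df$) makes explicit the ordering of the blocks, which the paper leaves implicit by stating only disjointness in the lemma and writing ``clearly'' $\partial_{k_t}=\max I_t$.
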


\begin{proof}
It follows from Theorem \ref{thm:max-deg-basis} that the set $D$ of
degrees is the union of the intervals $I_t=\{t\df+\ell |0\leq \ell\leq
r-1-\lceil t/2\rceil\}$ defined in Lemma \ref{lemma:sum} of Appendix \ref{app:proof}. By this
lemma, $k_t$ of \eqref{eq:kt} equals $\sum_{j=0}^t |I_j|$, and clearly
$\partial_{k_t}=\max I_t$, which is \eqref{eq:degreekt}. This proves
the first assertion, 
and the second assertion follows since each $I_j$ forms a contiguous
interval of integers. 
\end{proof}

\begin{remark}\label{rem:degrees}
{\rm
It is natural to try to bound the minimum distance of $C_k$ using only
the univariate description from Definition~\ref{def:ck}. Namely, if a
nonzero codeword of weight $w$ is obtained by evaluating a polynomial $u(x)$ of
degree at most $\partial_k$ on the set $Z_f\oplus Z_g$ of size
$\df^2$, then the RS-type degree lower bound\footnote{Here and throughout, the ``RS-type degree lower
bound'' on the weight $w$ of the evaluation vector of a polynomial
$u(x)$ on a set of size $n$ is $w\geq n-\deg(u)$.} gives
\[
w\ge \df^2-\deg(u)\ge \df^2-\partial_k.
\]
The endpoint $k=r^2$ shows that this estimate is too weak for our
purposes. Indeed, for $t=2(r-1)$ we have $k_t=r^2$ and
$
\partial_{r^2}=2(r-1)\df$, 
so the above argument yields
\[
d_{r^2}\ge \df^2-\partial_{r^2}
=\df^2-2(r-1)\df
=(\df-r+1)^2-(r-1)^2.
\]
However, $C_{r^2}$ is the full product code
$\rs_{q^m}(r,Z_f)\otimes\rs_{q^m}(r,Z_g)$, whose true minimum distance
is $(\df-r+1)^2$. 
Thus, the naive univariate degree bound misses the correct value by
$(r-1)^2$. In particular, it does not even recover the minimum
distance of the ambient product code. 

In Section~\ref{sec:5}, we derive an improved lower bound on the
minimum distance that uses not only the degree restriction
$\deg(u)\le \partial_k$, but also the additional structure coming from
the tensor-product description and the commutative diagram
\eqref{eq:diagram}. This improved bound is always at least the minimum
distance of the ambient product code, and for $k=r^2-1$ and
$k=r^2-2$ it matches the upper bound of Proposition~\ref{prop:bound} below.}
\end{remark}

\section{An upper bound}
\label{sec:bound}
Before deriving lower bounds on the minimum distance of the codes $C_k$, we first present a general upper bound. This bound will serve to quantify the gap between the minimum distance of the codes $C_k$ and the maximum attainable value, and to show that this maximum is achieved for $k = r^2 - 1, r^2 - 2$. As highlighted in the introduction, deriving this tailored upper bound is necessary because existing general bounds do not tightly capture the rigid grid topology of product codes for local distances $\delta \ge 3$.

\begin{proposition}\label{prop:bound}
Let $C_1,C_2$ be two linear codes of length
$n$ and minimum distance at least $n-r+1$ (for some  integers $1\leq
r\leq n$), and let 
$C\subseteq C_1\otimes C_2$ be a linear subcode. Write
$k:=\dim(C)\leq r^2$ and let $d$ be the minimum distance of $C$. 
Then for any
non-negative integers $a,b\leq r$ it holds that 
\begin{equation}\label{eq:gen_bound}
ab\geq r^2-k+1\quad \implies\quad d\leq (a+n-r)(b+n-r).
\end{equation}
In particular,
\begin{equation}\label{eq:bound}
d\leq \Big(\Big\lceil \sqrt{r^2-k+1}\Big\rceil+n-r
\Big)^2. 
\end{equation}

\end{proposition}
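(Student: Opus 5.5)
The plan is to produce a nonzero codeword of $C$ supported inside an $(a+n-r)\times(b+n-r)$ rectangle of the $n\times n$ grid. Such a codeword has weight at most $(a+n-r)(b+n-r)$, which gives $d\leq (a+n-r)(b+n-r)$.

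First I fix the geometry. Since $C_1$ has length $n$ and minimum distance at least $n-r+1$, the Singleton bound gives $\dim(C_1)\leq r$, and likewise $\dim(C_2)\leq r$. Choose row indices $R$ with $|R|=a+n-r$ and column indices $S$ with $|S|=b+n-r$; these sizes are at most $n$ because $a,b\leq r$. Let $\bar R$ and $\bar S$ be the complements, of sizes $r-a$ and $r-b$.

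Next I bound the dimension of the rectangle-supported subspace of the product code. Let $W\subseteq C_1\otimes C_2$ be the codewords vanishing outside $R\times S$. Consider the shortened codes $C_1' := \{c\in C_1 : c_i=0 \text{ for } i\in\bar R\}$ and $C_2'$ defined analogously with $\bar S$. Then $\dim C_1'\geq \dim C_1-(r-a)$, and $C_1'\otimes C_2'\subseteq W$. This only yields $\dim W\geq(\dim C_1-r+a)(\dim C_2-r+b)$, which is $ab$ only when $\dim C_i=r$.

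The main obstacle is the case $\dim C_i<r$. Here I would pad each $C_i$ to a code of dimension exactly $r$ containing it; it need not have large distance. This padding is not legitimate, however, because $C$ lives in the original product, so instead I compare dimensions directly. Write $k_i=\dim C_i\leq r$, so $\dim(C_1\otimes C_2)=k_1k_2$. Then
\[
\dim W\geq \max(k_1-r+a,0)\max(k_2-r+b,0),
\]
and I need $\dim W+\dim C>k_1k_2$ to force $W\cap C\neq\{0\}$. Setting $u=r-k_1\geq 0$ and $v=r-k_2\geq 0$, the required inequality $(a-u)(b-v)+r^2-ab+1>(r-u)(r-v)$ reduces to $ur+vr-uv> ub+va-uv$, that is, $u(r-b)+v(r-a)\geq 0$ strictly combined with the $+1$ slack. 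This holds as long as $a-u,b-v\geq 0$. If instead, say, $a<u$, then $k_1<r-a$. In that case I would restrict the row set $R$ to size $a'+n-r$ with $a'=\max(a-u,0)$ adjusted, or handle it directly by noting that $r^2-k+1\leq ab$ with $k\leq k_1k_2$ gives an inequality on $k_1k_2$, and I would check this carefully.

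Finally, $W\cap C$ contains a nonzero codeword supported in $R\times S$, so $d\leq |R||S|=(a+n-r)(b+n-r)$. For \eqref{eq:bound}, take $a=b=\lceil\sqrt{r^2-k+1}\rceil$. This choice satisfies $ab\geq r^2-k+1$ and $a\leq r$, since $r^2-k+1\leq r^2$ when $k\geq 1$.
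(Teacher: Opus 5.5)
Your route is correct and differs from the paper's. The one step you left open closes exactly along the line you suggested, because that case is vacuous.

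\textbf{Closing the open case.} We have $k\le k_1k_2\le k_1r$, and $k\ge r^2-ab+1\ge r^2-ar+1$ since $b\le r$. Together these give $k_1r\ge r^2-ar+1$, hence $k_1\ge r-a+1$, i.e.\ $a-u\ge 1$. Symmetrically $b-v\ge 1$. So under the hypothesis $ab\ge r^2-k+1$ both shortened codes have positive dimension. Hence the bound $\dim W=\dim C_1'\cdot\dim C_2'\ge (a-u)(b-v)$ is legitimate. Your computation $\dim W+k-k_1k_2\ge u(r-b)+v(r-a)+1\ge 1$ then gives $W\cap C\neq\{0\}$. The padding detour and the idea of shrinking $R$ can simply be dropped.

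\textbf{Comparison with the paper.} The paper argues by erasures. It erases your rectangle (the black part) together with two strips of sizes $(n-r)(r-a)$ and $(r-b)(n-r)$ (the gray part). This leaves only $r^2-ab<k$ unerased cells, so the pattern is not recoverable. Equivalently, some nonzero $c\in C$ vanishes on those $r^2-ab$ cells. Each column through the $(n-r)\times(r-a)$ strip, and each row through the $(r-b)\times(n-r)$ strip, then has $r$ forced zeros. So it is a component codeword of weight at most $n-r$, hence zero by the distance hypothesis. Thus $c$ is supported on the black rectangle.

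That argument works directly inside $C$. It never uses $\dim C_i$ or $\dim(C_1\otimes C_2)=k_1k_2$, and it needs no case split. Your argument instead identifies the rectangle-supported part of the product as $C_1'\otimes C_2'$ and counts dimensions inside $C_1\otimes C_2$. In the MDS case it is a one-liner ($ab+k>r^2$). It also uses the distance hypothesis only through the Singleton consequence $\dim C_i\le r$. So it proves the first assertion of the proposition under the weaker assumption $\dim C_1,\dim C_2\le r$, which the paper's erasure argument does not give.
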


\begin{proof}
\begin{figure}[h!tbp]
\centering
\begin{tikzpicture}[ 
    every node/.style={font=\footnotesize\sffamily}, 
    thin/.style={line width=0.35pt},
  ]


  \draw[line width=0.9pt] (0,0) rectangle (5*\aaa,5*\aaa);

  \fill[white] (0,\aaa) rectangle (4*\aaa,3.4*\aaa);       
  \fill[white] (2*\aaa,3.4*\aaa) rectangle (4*\aaa,5*\aaa); 

  \fill[gray!65] (2*\aaa,0) rectangle (4*\aaa,\aaa);         
  \fill[gray!65] (4*\aaa,\aaa) rectangle (5*\aaa,3.4*\aaa);     

  \fill[black] (0,3.4*\aaa) rectangle (2*\aaa,5*\aaa);         
  \fill[black] (4*\aaa,3.4*\aaa) rectangle (5*\aaa,5*\aaa);     
  \fill[black] (0,0) rectangle (2*\aaa,\aaa);                 
  \fill[black] (4*\aaa,0) rectangle (5*\aaa,\aaa);            

  \node[left=13mm of {(\aaa,4.2*\aaa)}]   {$b$};
  \node[left=13mm of {(\aaa,2.2*\aaa)}]   {$r-b$};
  \node[left=13mm of {(\aaa,0.5*\aaa)}] {$n-r$};

  \node[above=1mm of {(1*\aaa,5*\aaa)}]     {$a$};      
  \node[above=1mm of {(3*\aaa,5*\aaa)}]     {$r-a$};    
  \node[above=1mm of {(4.5*\aaa,5*\aaa)}]   {$n-r$};    

\end{tikzpicture}
\caption{The erasure pattern for the proof of Proposition \ref{prop:bound}.}
\label{fig:pattern}
\end{figure}
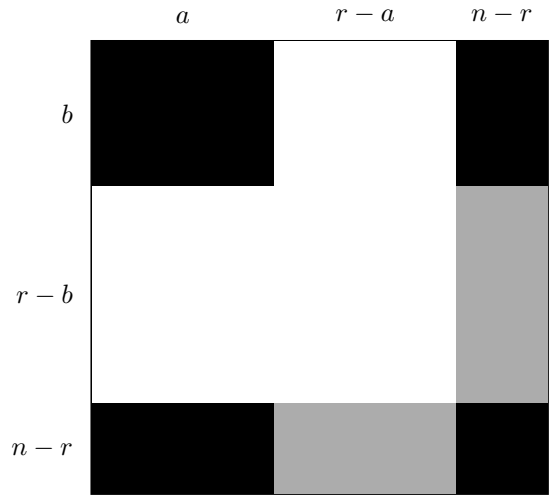

In Figure \ref{fig:pattern}, let $a,b\leq r$ be some non-negative
integers to be determined later.  If the black and gray parts are erased,
then local decoding can be used to restore the gray part,\footnote{Specifically, the bottom-middle gray rectangle has height $n-r$, and since the white regions above it are not erased, these columns contain exactly $n-r$ erasures. Because the local column codes have minimum distance at least $n-r+1$, they can  correct these erasures. The same logic applies to the middle-right gray rectangle using the local row codes.} and then if
the size of the black part is at most $d-1$, then the overall erasure
pattern can be decoded. If the total size of the black and gray parts
is strictly larger than the codimension $n^2-k$ of $C$, this leads to
 contradiction, because an $[n^2, k]$ linear code cannot uniquely recover from any erasure pattern of size $\ge n^2 - k + 1$. In other words, if 
\begin{equation}\label{eq:aval}
\underbrace{(a+n-r)(b+n-r)}_{\text{black}} +
\underbrace{(n-r)(r-a+r-b)}_{\text{gray}} \geq n^2-k+1,
\end{equation}
for some non-negative integers $a,b\leq r$, then $(a+n-r)(b+n-r)\geq d$. Put
differently, 
$d\leq (a+n-r)(b+n-r)$ for any non-negative integers $a,b\leq r$ satisfying
\eqref{eq:aval}.
It is easily verified that \eqref{eq:aval} is equivalent to $ab\geq
r^2-k+1$, which completes the proof of the first assertion. The second
assertion follows by letting $a=b=\lceil\sqrt{r^2-k+1} \rceil$, which
is clearly $\leq r$ for $1\leq k\leq r^2$.
\end{proof}

To demonstrate that the bound can be tight, we begin by considering the endpoint values $k=1,r^2$.
\begin{example}[Tightness of the bound for $k=1,r^2$]
{\rm
Suppose that $C_1,C_2$ are MDS. For $k=r^2$, the bound
\eqref{eq:bound} reads $d\leq 
(n-r+1)^2$, and the right-hand side is indeed the minimum distance of
the product code. Also, for $k=1$, \eqref{eq:bound} reads $d\leq n^2$,
which is the minimum distance of the one-dimensional subcode of
$C_1\otimes C_2$ generated by the simple tensor of two vectors of
weight $n$ from $C_1$ and $C_2$ (such vectors are guaranteed for MDS 
codes).\footnote{For the  case of RS codes, this is
obtained by evaluating a non-zero constant polynomial, and the general
case follows, because MDS codes of the same length, dimension, and
underlying field, have the same weight distribution.} We conclude that
the bound is tight at the endpoints of $k=1,r^2$.
}
\end{example}

Next, we would like to show that when $C_1,C_2$ are MDS and $k=r^2-1$,
the bound of Proposition \ref{prop:bound} is the next-to-minimum
weight of $C_1\otimes C_2$, that is, the smallest weight that is
strictly larger than the minimum distance. For this, we first have the
following proposition. 

\begin{proposition}\label{prop:secondw}
Let $C_1,C_2$ be linear codes of minimum distance $\delta$, where at
least one of $C_1$, $C_2$ has a word of weight $\delta+1$. Then the
next-to-minimum weight of $C_1\otimes C_2$ is $\delta(\delta+1)$.
\end{proposition}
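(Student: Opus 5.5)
The proof has two parts: a construction showing that the weight $\delta(\delta+1)$ occurs, and a counting argument showing that no weight strictly between $\delta^2$ and $\delta(\delta+1)$ occurs. Recall from Section~\ref{sec:prod} that $C_1\otimes C_2$ has minimum distance $\delta^2$, so "next-to-minimum weight" means the smallest weight of a codeword that is strictly larger than $\delta^2$.

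\emph{Attainment.} Without loss of generality, suppose $C_1$ has a word $\bs{c}_1$ of weight $\delta+1$; the other case is symmetric. Take any $\bs{c}_2\in C_2$ of weight $\delta$. The simple tensor $\bs{c}_1\otimes\bs{c}_2=\bs{c}_1^T\bs{c}_2$ lies in $C_1\otimes C_2$. Its support is exactly $\mathrm{supp}(\bs{c}_1)\times\mathrm{supp}(\bs{c}_2)$, so its weight is $(\delta+1)\delta>\delta^2$.

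\emph{Lower bound.} Let $c\in C_1\otimes C_2$ be nonzero. Let $R$ be the number of nonzero rows of $c$ and $S$ the number of nonzero columns. By the definition of the product code, every nonzero column is a nonzero word of $C_1$, so it has at least $\delta$ nonzero entries. All of these entries lie in the $R$ nonzero rows, hence $R\geq\delta$. Symmetrically, every nonzero row is a nonzero word of $C_2$, which gives $S\geq\delta$.

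Counting the support by columns gives $\mathrm{wt}(c)\geq\delta S$, and counting by rows gives $\mathrm{wt}(c)\geq\delta R$. Hence, if $R\geq\delta+1$ or $S\geq\delta+1$, then $\mathrm{wt}(c)\geq\delta(\delta+1)$.

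In the remaining case $R=S=\delta$, the support of $c$ lies inside a $\delta\times\delta$ subgrid. Each nonzero column has at least $\delta$ nonzeros confined to only $\delta$ rows, so it is nonzero in every one of those rows. Therefore the support is the whole subgrid and $\mathrm{wt}(c)=\delta^2$.

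Consequently, every codeword of weight greater than $\delta^2$ has weight at least $\delta(\delta+1)$. Together with the attainment above, this shows the next-to-minimum weight is exactly $\delta(\delta+1)$. There is no real obstacle in this argument. The only point needing care is the degenerate case $R=S=\delta$, where the forced full support of the $\delta\times\delta$ subgrid is what rules out weights strictly between $\delta^2$ and $\delta(\delta+1)$.
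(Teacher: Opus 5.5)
Your proposal is correct and follows essentially the same argument as the paper. The paper attains $\delta(\delta+1)$ with the simple tensor $\bs{c}_1\otimes\bs{c}_2$, then argues via the numbers of nonzero rows and columns that weight above $\delta^2$ forces one of them to be at least $\delta+1$. You spell out the case $R=S=\delta$ (full support of the $\delta\times\delta$ subgrid) more explicitly than the paper, which leaves it implicit through the bound $\mathrm{wt}(c)\le n_rn_c$.
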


\begin{proof}
Suppose w.l.o.g.~that $C_1$ has a word $\bs{c}_1$ of weight
$\delta+1$, and let $\bs{c}_2\in C_2$ have weight $\delta$. Then
$\bs{c}_1\otimes \bs{c}_2$ has weight $\delta(\delta+1)$, so that the
next-to-minimum weight of $C_1\otimes C_2$ is at most
$\delta(\delta+1)$. Conversely, let
$n_r,n_c$ be the numbers of non-zero rows and columns (resp.) in 
some non-zero word $\bs{c}$ of $C_1\otimes C_2$. 
To have a weight strictly larger than $\delta^2$, at least one of $n_r,n_c$
must be at least $\delta+1$, in which case the weight of $\bs{c}$ is at
least $\delta(\delta+1)$.
\end{proof}

\begin{example}\label{example:hone}
{\rm
If  $k=r^2-1$, we have  $r^2-k+1=2$, 
and we may choose $(a,b)=(1,2)$ in \eqref{eq:gen_bound}, resulting in
$d\leq (n-r+1)(n-r+2)$. By Proposition \ref{prop:secondw}, this is the
next-to-minimum weight of $C_1\otimes C_2$ when $C_1,C_2$ are MDS. We will see
below in Corollary \ref{coro:hone} that the code $C_k$ with $k=r^2-1$ achieves this bound, and it
is therefore tight. This example also demonstrates that in some cases,
a tighter bound than \eqref{eq:bound} is obtained by choosing unequal
$a,b$ in \eqref{eq:gen_bound}.
}
\end{example}

For completeness, in Appendix \ref{app:boundv2} we provide an alternative upper bound. While generally looser than Proposition \ref{prop:bound}, it can be slightly sharper in low dimensions; concrete examples illustrating this are provided in Section \ref{sec:eg}.

\section{Minimum distance lower bounds for the codes $C_k$}
\label{sec:5}
Throughout this section, we maintain the assumptions and notation of Section \ref{sec:codes}. 
Recalling Remark \ref {rem:degrees}, our first goal is to derive a lower bound that is tighter than the RS-type degree lower bound for values of $k$ close to $r^2$. We begin with the most general lower bound in Theorem \ref{thm:boundopt}. Although this bound is not presented as an explicit function, we prove that it uniformly outperforms the naive RS-type degree bound and ensures the guaranteed distance never falls below that of the baseline product code. 
We then proceed to finding an explicit expression for the bound  for one and two heavy parities ($k=r^2-1$ and $k=r^2-2$) in Corollaries \ref{coro:hone}, \ref{coro:htwo}, respectively. The lower bound in these two cases perfectly matches the upper bound established in Section \ref{sec:bound},  which proves that our construction is strictly optimal in these cases. It is then proved in Proposition \ref{prop:smallk} that for $k\leq2r-1$, the RS-type degree bound attains the LRC upper bound \eqref{eq:1dlrc}, and hence the construction is optimal also in this case. Finally, it is shown in Corollary \ref{cor:ninety-percent} that the lower bound is a large fraction of the upper bound \eqref{eq:bound} for a broad range of dimensions $k$ when the local codes have rate $\leq 1/2$.

Theorem \ref{thm:boundopt} below heavily uses the structure of the codes $C_k$ and the commutative diagram \eqref{eq:diagram}, moving freely between the bivariate and univariate evaluated polynomials, and building on properties obtained from each representation. A key observation is that at a coordinate that is in the intersection of a zero column and a zero row of a codeword, the evaluated univariate polynomial has a zero \emph{of multiplicity at least $2$}.

\begin{theorem}[Main lower bound]\label{thm:boundopt}
For a non-negative integer $\partial$ and positive integers $n_r,n_c$, let
$$
B(\partial,n_r,n_c):=\max\Big\{\df^2-\partial+(\df-n_r)(\df-n_c),
n_r\delta,n_c\delta \Big\}, 
$$
where $\delta:=\df-r+1$. Then for $1\leq k\leq r^2$, the minimum
distance $d_k$ of $C_k$ satisfies
\begin{equation}\label{eq:boundopt}
d_k\geq \min_{\delta\leq n_r,n_c\leq \df} B(\partial_k,n_r,n_c).
\end{equation}
\end{theorem}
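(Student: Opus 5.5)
Take a nonzero codeword $c\in C_k$. By Definition \ref{def:ck} it is the evaluation vector on $Z_f\oplus Z_g$ of some $u(x)\in\linspan_F(B_{r^2})$ with $\deg(u)\leq\partial_k$. By Corollary \ref{coro:codedef}, $u=s(g,f)$ for a unique $s\in F[x,y]^{(r,r)}$, and through the diagram \eqref{eq:diagram} the same codeword is the matrix $(s(\beta,\gamma))_{\beta\in Z_f,\gamma\in Z_g}$ of the product code $\rs(r,Z_f)\otimes\rs(r,Z_g)$. Let $n_r$ and $n_c$ be the numbers of nonzero rows (indexed by $\beta$) and nonzero columns (indexed by $\gamma$). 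Each nonzero row and each nonzero column is a nonzero RS codeword of length $\df$ and dimension $r$, so it has weight at least $\delta$. Hence $\delta\leq n_r,n_c\leq\df$, and $\mathrm{wt}(c)\geq n_r\delta$ and $\mathrm{wt}(c)\geq n_c\delta$. It remains to show $\mathrm{wt}(c)\geq \df^2-\partial_k+(\df-n_r)(\df-n_c)$. Then $\mathrm{wt}(c)\geq B(\partial_k,n_r,n_c)$, which is at least the minimum over the admissible $(n_r,n_c)$.

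For the first term, count zeros of $u$ with multiplicity. The zero set of $u$ on $Z_f\oplus Z_g$ has size $\df^2-\mathrm{wt}(c)$. Let $R\subseteq Z_f$ index the zero rows and $Q\subseteq Z_g$ the zero columns, so $|R|=\df-n_r$ and $|Q|=\df-n_c$. The key claim is that for $\beta\in R$ and $\gamma\in Q$, the point $\alpha=\beta+\gamma$ is a zero of $u$ of multiplicity at least $2$. Granting this,
\[
\deg(u)\geq (\df^2-\mathrm{wt}(c)) + (\df-n_r)(\df-n_c).
\]
Since $u\neq 0$, this gives $\mathrm{wt}(c)\geq \df^2-\partial_k+(\df-n_r)(\df-n_c)$.

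To prove the multiplicity claim, compute the derivative with the chain rule: $u'(x)=s_x(g,f)\,g'+s_y(g,f)\,f'$. Since $f,g$ are linearized, $f'=a_0$ and $g'=1-a_0$ are constants. At $\alpha=\beta+\gamma$ the proof of Proposition \ref{prop:diagram} gives $g(\alpha)=\beta$ and $f(\alpha)=\gamma$, so
\[
u'(\alpha)=(1-a_0)\,s_x(\beta,\gamma)+a_0\,s_y(\beta,\gamma).
\]
Because column $\gamma$ is zero, the univariate polynomial $s(x,\gamma)$ has degree less than $r\leq\df$ and vanishes on all of $Z_f$. It is therefore identically zero, and so $s_x(\beta,\gamma)=0$. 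Symmetrically, row $\beta$ is zero, so $s(\beta,y)\equiv 0$ and $s_y(\beta,\gamma)=0$. Thus $u(\alpha)=u'(\alpha)=0$, giving a double root. This works in any characteristic, since the formal derivative criterion for multiplicity $\geq 2$ is valid over any field.

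The main obstacle is this double-root step, specifically making sure the vanishing of a row or column yields identical vanishing of the corresponding one-variable specialization. That follows because $|Z_f|=|Z_g|=\df\geq r$, which exceeds the $x$- and $y$-degrees of $s$. Once this is in place, the rest is bookkeeping: combine the three lower bounds into the maximum $B$, and minimize over the unknown pair $(n_r,n_c)$ in the range $[\delta,\df]^2$.
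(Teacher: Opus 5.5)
Your proof is correct and matches the paper's argument. It uses the same three lower bounds, and the first comes from counting root multiplicities of $u=s(g,f)$ on $Z_f\oplus Z_g$, with a double root at every $\beta+\gamma$ where $\beta$ indexes a zero row and $\gamma$ a zero column. The one variation is how you certify the double root: you use the chain rule $u'(\alpha)=s_x(\beta,\gamma)g'(\alpha)+s_y(\beta,\gamma)f'(\alpha)=0$, which does not even need $f',g'$ to be constant. The paper instead deduces $(x-\beta)(y-\gamma)\mid s$, hence $(g-\beta)(f-\gamma)\mid h$, with each factor vanishing at $\alpha$.
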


\begin{proof}
Let $c\in C_k$ be a non-zero codeword of weight $w$, and let
$c'\in F^{Z_f\times Z_g}$ be the corresponding word in the usual tensor
product code. Let $n_r,n_c$ be the numbers of non-zero rows and
columns (resp.) in $c'$, and write
$$
X:=\df-n_r,\qquad Y:=\df-n_c,
$$
for the numbers of zero rows and zero columns (resp.). Let
$s(x,y)\in F[x,y]^{(r,r)}$ be the bivariate polynomial whose
evaluation matrix on $Z_f\times Z_g$ is $c'$, and let
$$
h(x):=s(g(x),f(x))\in F[x].
$$
Then by the commutativity of the diagram \eqref{eq:diagram}, $h$ is the univariate polynomial whose evaluation on
$Z_f\oplus Z_g$ gives the codeword $c$, and $\deg(h)\leq \partial_k$ by
definition of $C_k$.

We first prove the lower bound
\begin{equation}\label{eq:firstB}
w\geq \df^2-\partial_k+(\df-n_r)(\df-n_c).
\end{equation}

Let $R\subseteq Z_f$ be the set of row indices of zero rows of $c'$, so
that $|R|=X$, and let $T\subseteq Z_g$ be the set of column indices of
zero columns of $c'$, so that $|T|=Y$.

Now let $\beta\in R$. Since the $\beta$-th row of $c'$ is zero, the
polynomial $s(\beta,y)$ vanishes on all of $Z_g$. But
$\deg_y(s)<r\leq \df=|Z_g|$, hence $s(\beta,y)$ is the zero polynomial.
Therefore $x-\beta$ divides $s(x,y)$. Similarly, if $\gamma\in T$, then
the $\gamma$-th column of $c'$ is zero, and the polynomial $s(x,\gamma)$
vanishes on all of $Z_f$. Since $\deg_x(s)<r\leq \df=|Z_f|$, it follows
that $y-\gamma$ divides $s(x,y)$.

Fix now $\beta\in R$ and $\gamma\in T$. From the above, both
$x-\beta$ and $y-\gamma$ divide $s(x,y)$, and therefore
$$
(g(x)-\beta)(f(x)-\gamma)\mid h(x)=s(g(x),f(x)).
$$
Let
$
\alpha:=\beta+\gamma\in Z_f\oplus Z_g.
$
By Proposition~\ref{prop:diagram},
$
g(\alpha)=g(\beta+\gamma)=\beta,
f(\alpha)=f(\beta+\gamma)=\gamma.
$
Hence both $g(x)-\beta$ and $f(x)-\gamma$ vanish at $\alpha$, so each
is divisible by $x-\alpha$. Consequently,
$$
(x-\alpha)^2\mid h(x).
$$
Thus, for every pair $(\beta,\gamma)\in R\times T$, the point
$\alpha=\beta+\gamma$ is a root of multiplicity at least $2$ of $h$.

We next count root multiplicities of $h$ at the evaluation points
$Z_f\oplus Z_g$. Since the map
$$
\psi^{-1}\colon Z_f\times Z_g\to Z_f\oplus Z_g,\qquad
(\beta,\gamma)\mapsto \beta+\gamma
$$
is a bijection, distinct pairs $(\beta,\gamma)$ correspond to distinct
evaluation points. Every zero entry of $c'$ gives a root of $h$ of
multiplicity at least $1$, and for the $XY$ zero entries in the
rectangle $R\times T$ we have multiplicity at least $2$. Since $c'$ has
$\df^2-w$ zero entries in total, it follows that the sum of the
multiplicities of the roots of $h$ in $Z_f\oplus Z_g$ is at least
$$
(\df^2-w)+XY.
$$
As $h$ is a non-zero univariate polynomial over $F$, the sum of the
multiplicities of all its roots is at most $\deg(h)$. Therefore
$$
\partial_k\geq\deg(h)\geq \df^2-w+XY.
$$
Equivalently,
$$
w\geq \df^2-\partial_k+XY.
$$
Substituting $X=\df-n_r$ and $Y=\df-n_c$, we get
$$
w\geq \df^2-\partial_k+(\df-n_r)(\df-n_c),
$$
which proves \eqref{eq:firstB}.

It remains to prove the two other lower bounds. Since every non-zero
row and every non-zero column of $c'$ is a codeword in a local RS code
of minimum distance $\delta=\df-r+1$, we clearly have
$
w\geq n_r\delta, n_c\delta.
$
Hence
$$
w\geq
\max\Big\{\df^2-\partial_k+(\df-n_r)(\df-n_c),\; n_r\delta,\; n_c\delta\Big\}
=
B(\partial_k,n_r,n_c).
$$
Since this holds for every pair $(n_r,n_c)$ attainable in $C_k$, taking the
minimum over all $\delta\leq n_r,n_c\leq \df$ gives
$$
d_k\geq \min_{\delta\leq n_r,n_c\leq \df} B(\partial_k,n_r,n_c).
$$
This completes the proof.
\end{proof}

Let us now verify that the bound of Theorem \ref{thm:boundopt} uniformly outperforms the naive RS-type degree lower bound

\begin{proposition}\label{prop:surp}
For all $1\leq k\leq r^2$, the lower bound of Theorem \ref{thm:boundopt} on $d_k$ is not smaller than $n^2-\partial_k$.
\end{proposition}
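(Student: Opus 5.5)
Here $n$ denotes the code length $\df$. The claim is that for every pair $(n_r,n_c)$ with $\delta\leq n_r,n_c\leq \df$ we have $B(\partial_k,n_r,n_c)\geq \df^2-\partial_k$. Since the minimum of such quantities then also satisfies the inequality, this gives the result.

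The direct approach is to look only at the first term of the maximum defining $B$. That term is $\df^2-\partial_k+(\df-n_r)(\df-n_c)$. Because $n_r,n_c\leq \df$, both factors $\df-n_r$ and $\df-n_c$ are non-negative. Hence the product is non-negative, and the term is at least $\df^2-\partial_k$.

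Since $B$ is a maximum of three terms, $B(\partial_k,n_r,n_c)$ is at least this first term. Taking the minimum over the admissible range $\delta\leq n_r,n_c\leq \df$ preserves the inequality, because each $B$ value exceeds the fixed quantity $\df^2-\partial_k$.

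I do not expect any real obstacle; the only point to check is that the range of minimization really has $n_r,n_c\leq \df$, which it does by definition in \eqref{eq:boundopt}. I could also remark that the naive bound may be negative for large $k$, but this does not affect the inequality. For completeness I might add that at $k=r^2$ the bound in fact recovers $(\df-r+1)^2$, as claimed in Remark~\ref{rem:degrees}. That would be shown by splitting into the cases where $n_r$ or $n_c$ is large, so that $n_r\delta$ or $n_c\delta$ dominates, and the case where both are small, so that the product term dominates. That refinement is not needed for this proposition.
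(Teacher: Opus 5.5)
Your proof is correct and follows the paper's argument exactly. The first term of $B$ exceeds $\df^2-\partial_k$ by $(\df-n_r)(\df-n_c)\geq 0$, since $n_r,n_c\leq\df$, and this inequality survives both the maximum and the minimum. (Your optional aside about $k=r^2$ needs no case split: $n_r\delta\geq\delta^2$ already forces the bound to be at least $\delta^2=(\df-r+1)^2$ for every $k$.)
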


\begin{proof}
Using the terminology of Theorem \ref{thm:boundopt}, this follows immediately from $(\df-n_r)(\df-n_c)\geq 0$ for all $n_r,n_c\leq \df$.
\end{proof}

Next, we would like to find the exact value of the bound \eqref{eq:boundopt} for $r=k^2-1$ and $r=k^2-2$. For this, it will be useful to first explicitly state the three largest numbers in the set $D$ of degrees.\footnote{These values can also be derived from Proposition \ref{prop:degrees}; however, presenting the calculation directly from the set $D$ is more insightful.}
By Theorem~\ref{thm:max-deg-basis},
$$
D=\Big\{t\df+\ell \;\Big|\; 0\le t\le 2r-2,\ 0\le \ell\le r-1-\Big\lceil \frac t2\Big\rceil\Big\}.
$$
For $t=2r-2$, the only possible value is $\ell=0$, so the largest degree in $D$ is
$
\partial_{r^2}=(2r-2)\df.
$
For $t=2r-3$, again $\ell=0$ is the only possible value, and therefore
$
\partial_{r^2-1}=(2r-3)\df.
$
For $t=2r-4$, we have
$$
0\le \ell\le r-1-\left\lceil \frac{2r-4}{2}\right\rceil = 1,
$$
so the corresponding degrees are
$
(2r-4)\df, (2r-4)\df+1.
$
Hence the third-largest degree in $D$ is
$
\partial_{r^2-2}=(2r-4)\df+1.
$

In the following two corollaries to Theorem \ref{thm:boundopt}, we consider the cases of one and two heavy parities ($k=r^2-1$, $k=r^2-2$), and prove the optimality of the code $C_k$ in these cases.

\begin{corollary}[Distance optimality for one heavy parity]\label{coro:hone}
For $k=r^2-1$, it holds that the minimum distance $d_k$ of $C_k$ equals $\delta(\delta+1)$, where $\delta:=\df-r+1$. Hence $d_{r^2-1}$ coincides with the upper bound of Example \ref{example:hone}, and the code $C_{r^2-1}$ is optimal with respect to the minimum distance.
\end{corollary}

\begin{proof}
Here
$
\partial_k=(2r-3)\df.
$
By Theorem \ref{thm:boundopt} and Example \ref{example:hone}, it is sufficient to prove 
$$
\min_{\delta\le n_r,n_c\le \df} B(\partial_k,n_r,n_c)
=
\delta(\delta+1).
$$

To see that this value is attained, take
$
(n_r,n_c)=(\delta,\delta+1).
$
Then
$
\df-\delta=r-1, \df-(\delta+1)=r-2,
$
and therefore the first term of $B$ equals
$$
\df^2-\partial_k+(\df-\delta)(\df-(\delta+1))
=
\df^2-(2r-3)\df+(r-1)(r-2)
=
\delta(\delta+1).
$$
The second and third terms are
$
n_r\delta=\delta^2, n_c\delta=\delta(\delta+1).
$
Hence
$
B(\partial_k,\delta,\delta+1)
=
\delta(\delta+1),
$
and so
$
\min_{\delta\le n_r,n_c\le \df} B(\partial_k,n_r,n_c)
\le \delta(\delta+1).
$
Conversely, assume that for some $(n_r,n_c)$ one has
$
B(\partial_k,n_r,n_c)<\delta(\delta+1).
$
Since $B$ is the maximum of three terms, this implies in particular
$
n_r\delta<\delta(\delta+1), n_c\delta<\delta(\delta+1).
$
As $n_r,n_c\ge \delta$, it follows that necessarily
$
n_r=n_c=\delta.
$
But then the first term of $B$ becomes
$$
\df^2-\partial_k+(\df-\delta)^2
=
\df^2-(2r-3)\df+(r-1)^2
=
\delta(\delta+1)+(r-1)>\delta(\delta+1).$$ This contradiction proves that
$$
\min_{\delta\le n_r,n_c\le \df} B(\partial_k,n_r,n_c)
=
\delta(\delta+1).
$$
This completes the proof.
\end{proof}

\begin{corollary}[Distance optimality for two heavy parities]\label{coro:htwo}
Suppose that $r\geq 3$. Then for $k=r^2-2$, it holds that the minimum distance $d_k$ of $C_k$ equals $\delta(\delta+2)$, where $\delta:=\df-r+1$, and the code $C_{r^2-2}$ is optimal with respect to the minimum distance.
\end{corollary}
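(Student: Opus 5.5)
We follow the template of Corollary~\ref{coro:hone} and prove two inequalities. The upper bound $d_{r^2-2}\le \delta(\delta+2)$ should come from Proposition~\ref{prop:bound}. For $k=r^2-2$ we have $r^2-k+1=3$. Choosing $(a,b)=(1,3)$ in \eqref{eq:gen_bound}, which is legal since $r\ge 3$, gives
\[
d\le (n-r+1)(n-r+3)=\delta(\delta+2)
\]
with $n=\df$. For the lower bound we use Theorem~\ref{thm:boundopt} with $\partial_k=(2r-4)\df+1$, as computed before Corollary~\ref{coro:hone}. It then suffices to show that $B(\partial_k,n_r,n_c)\ge \delta(\delta+2)$ for all $\delta\le n_r,n_c\le\df$. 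Optionally we can also check that the minimum equals $\delta(\delta+2)$. Equality is attained at $(n_r,n_c)=(\delta,\delta+2)$: there the first term is
\[
\df^2-(2r-4)\df-1+(r-1)(r-3)=\delta(\delta+2),
\]
and the third term $n_c\delta$ is also $\delta(\delta+2)$.

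For the lower bound, suppose $B<\delta(\delta+2)$. Then $n_r\delta$ and $n_c\delta$ are both less than $\delta(\delta+2)$, so $n_r,n_c\in\{\delta,\delta+1\}$. Write $n_r=\delta+i$ and $n_c=\delta+j$ with $i,j\in\{0,1\}$. Then $\df-n_r=r-1-i$ and $\df-n_c=r-1-j$. Using $\delta=\df-r+1$, the first term becomes
\[
\df^2-(2r-4)\df-1+(r-1-i)(r-1-j).
\]
We compare this with
\[
\delta(\delta+2)=\df^2-(2r-4)\df+(r-1)^2-2(r-1)=\df^2-(2r-4)\df+(r-1)(r-3).
\]
So it remains to check that $(r-1-i)(r-1-j)-1\ge (r-1)(r-3)$ for the three cases $(i,j)\in\{(0,0),(0,1),(1,1)\}$, the case $(1,0)$ being symmetric.
\begin{itemize}
\item For $(0,0)$ the difference is $(r-1)^2-1-(r-1)(r-3)=2r-3>0$.
\item For $(0,1)$ the difference is $(r-1)(r-2)-1-(r-1)(r-3)=r-2>0$.
\item For $(1,1)$ the difference is $(r-2)^2-1-(r-1)(r-3)=0$.
\end{itemize}
In every case the first term is at least $\delta(\delta+2)$, which contradicts $B<\delta(\delta+2)$. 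Hence the minimum in \eqref{eq:boundopt} is at least $\delta(\delta+2)$, so $d_{r^2-2}\ge\delta(\delta+2)$. Combined with the upper bound, this gives equality and optimality.

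The step most likely to need care is the $(1,1)$ case. There the first term is exactly $\delta(\delta+2)$ rather than strictly larger, so we must verify this arithmetic precisely. The hypothesis $r\ge3$ enters only to allow $b=3\le r$ in Proposition~\ref{prop:bound}; the $(0,1)$ case only needs $r>2$ for strict inequality, and equality there would be harmless anyway.
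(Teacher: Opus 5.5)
Your proof is correct and follows the paper's argument: Theorem~\ref{thm:boundopt} with $\partial_k=(2r-4)\df+1$, forcing $n_r,n_c\le\delta+1$, plus $(a,b)=(1,3)$ in Proposition~\ref{prop:bound}, with your three-case check being an unrolled form of the paper's single observation $(\df-n_r)(\df-n_c)\ge(r-2)^2$. One small slip in the optional attainment check: at $(n_r,n_c)=(\delta,\delta+2)$ the first term of $B$ is $\delta(\delta+2)-1$, not $\delta(\delta+2)$, though $B$ still equals $\delta(\delta+2)$ via the third term $n_c\delta$ (the paper uses $(\delta+1,\delta+1)$, where the first term is exactly $\delta(\delta+2)$).
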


\begin{proof}
Let us first verify that the lower bound of Theorem \ref{thm:boundopt} equals $\delta(\delta+2)$ for $k=r^2-2$. Here
$
\partial_k=(2r-4)\df+1.
$
We claim that
$$
\min_{\delta\le n_r,n_c\le \df} B(\partial_k,n_r,n_c)
=
\delta(\delta+2).
$$
To see that this value is attained, take
$
(n_r,n_c)=(\delta+1,\delta+1).
$
Then
$
\df-(\delta+1)=r-2$,
and so the first term of $B$ equals
$$
\df^2-\partial_k+(\df-(\delta+1))^2
=
\df^2-((2r-4)\df+1)+(r-2)^2
=
\delta(\delta+2).
$$
The second and third terms are both equal to
$
(\delta+1)\delta.
$
Hence
$
B(\partial_k,\delta+1,\delta+1)=\delta(\delta+2)$,
and therefore
$
\min_{\delta\le n_r,n_c\le \df} B(\partial_k,n_r,n_c)
\le \delta(\delta+2).
$

Conversely, assume that for some $(n_r,n_c)$ one has
$
B(\partial_k,n_r,n_c)<\delta(\delta+2).
$
Then
$
n_r\delta<\delta(\delta+2), n_c\delta<\delta(\delta+2),
$
Hence we obtain
$$
\delta \leq n_r,n_c\le \delta+1.
$$
Now write
$
a:=\df-n_r,\qquad b:=\df-n_c.
$
Then
$
a,b\ge \df-(\delta+1)=r-2.
$
Then the first term in $B$ is
$$
\df^2-\partial_k+ab
=
\df^2-((2r-4)\df+1)+ab
=
\delta(\delta+2)+ab-(r-2)^2.
$$
Since $a,b\ge r-2$, $
ab\ge (r-2)^2
$,
and therefore
$
\df^2-\partial_k+ab\ge \delta(\delta+2).
$
Thus the first term of $B$ is already at least $\delta(\delta+2)$, contradicting the assumption
$$
B(\partial_k,n_r,n_c)<\delta(\delta+2).
$$
We conclude that
$$
\min_{\delta\le n_r,n_c\le \df} B(\partial_k,n_r,n_c)
=
\delta(\delta+2).
$$

To prove that the lower bound is equal to the minimum distance and that $C_{r^2-2}$ is optimal with respect to the minimum distance, it is sufficient to prove that the lower bound coincides with an upper bound. Now, in the bound \eqref{eq:gen_bound} for $k=r^2-2$, we may choose
$(a,b)=(1,3)$ (as $r\geq 3$) to obtain $d\leq (\df-r+1)(\df-r+3)=\delta(\delta+2)$. 
\end{proof}

We remark that the above corollary holds also for $r=2$, in which case $2=r^2-2\leq 2r-1=3$, and the assertion follows from Proposition \ref{prop:smallk} below.

So far, we have proved optimality for one and two heavy parities, that is, for very high dimension. Next, we will show that the codes are optimal also for $k\leq 2r-1$. In this case, the 
minimum distance coincides with the upper bound \eqref{eq:1dlrc} for
LRCs with a single recovery set, as shown in the following
proposition. 

\begin{proposition}\label{prop:smallk}
For $k\leq 2r-1$, the minimum distance $d_k$ of $C_k$ satisfies
\begin{equation}\label{eq:1dlrcagain}
d_k= \df^2-k+1-\left\lfloor \frac{k-1}{r} \right\rfloor (\df-r).
\end{equation}
Hence, for such $k$, the minimum distance  coincides with 
the upper bound \eqref{eq:1dlrc}, and the codes are optimal
with respect to the minimum distance.
\end{proposition}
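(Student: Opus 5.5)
For $k\le 2r-1$ the degrees $\partial_k$ lie in the first two intervals $I_0$ and $I_1$ of Proposition \ref{prop:degrees}. We will show that the naive RS-type degree lower bound already equals the right-hand side of \eqref{eq:1dlrcagain}. The reverse inequality is the Kamath \emph{et al.} bound \eqref{eq:1dlrc}, applied with $\delta-1=\df-r$. That bound applies because $C_k$ is a subcode of $C_{r^2}$, which is a relabeled product of two $[\df,r,\df-r+1]$ RS codes. Every coordinate therefore has $(r,\delta)$-locality, and $k\le r^2$.

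The lower bound is a direct computation. By Proposition \ref{prop:degrees}, $k_0=r$ and $\partial_{k_0}=r-1$, so for $1\le k\le r$ we have $\partial_k=k-1$. The RS-type bound gives $d_k\ge \df^2-(k-1)$. This matches \eqref{eq:1dlrcagain} because $\lfloor (k-1)/r\rfloor=0$ in this range.

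For $t=1$, \eqref{eq:kt} gives $k_1=2r-1$ and \eqref{eq:degreekt} gives $\partial_{k_1}=\df+r-2$. So for $r<k\le 2r-1$ we get $\partial_k=\df+r-2-(2r-1-k)=\df+k-r-1$. The RS-type bound then gives $d_k\ge \df^2-\df-k+r+1=\df^2-k+1-(\df-r)$, which is \eqref{eq:1dlrcagain} since $\lfloor (k-1)/r\rfloor=1$ for $r+1\le k\le 2r-1$. Alternatively, Proposition \ref{prop:surp} together with Theorem \ref{thm:boundopt} gives the same lower bound, but the RS-type argument is enough.

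Combining the two inequalities yields equality. The only point needing care is the use of \eqref{eq:1dlrc}. We must confirm that the locality hypothesis holds with local distance exactly $\delta=\df-r+1$ and locality $r$, and that the bound as cited applies to every subcode of dimension $k$. Both follow from the product structure recalled in the introduction. I do not expect a real obstacle; the main risk is an off-by-one error in the boundary cases $k=r$ and $k=r+1$, which I will check directly against the interval endpoints.
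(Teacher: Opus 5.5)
Your proof is correct and follows the paper's argument: you show that the RS-type degree bound $\df^2-\partial_k$ equals the right-hand side of \eqref{eq:1dlrcagain} on the intervals $I_0$ and $I_1$, and you invoke \eqref{eq:1dlrc} for the matching upper bound. The only cosmetic difference is that you compute $\partial_k=\df+k-r-1$ on $I_1$ directly from Proposition \ref{prop:degrees}, whereas the paper checks the endpoint $k=2r-1$ and then argues by unit steps.
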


\begin{proof}
In \eqref{eq:d1d2}, the first two intervals in $D$ (corresponding to
$t=0,1$) include 
exactly $2r-1$ degrees, of which, the first interval ($t=0$) is
$\{0,1,\ldots,r-1\}$. So, for $1\leq k\leq r$ we have
$\partial_k=k-1$,  and the RS-type degree
bound coincides with the Singleton bound, which obviously agrees with
\eqref{eq:1dlrcagain}. It therefore remains to consider the range
$r+1\leq k\leq 2r-1$, in the second interval ($t=1$).\footnote{
Note that this set of $k$-values is empty for $r=1$, so that we may
assume $r\geq 2$.
} 
The last degree
in this interval is $\partial_{2r-1}=\df+r-2$, so that the RS-type
degree bound on the minimum distance is $\df^2-\df-r+2$, which agrees
with \eqref{eq:1dlrcagain}. Now, moving downward from $k=2r-1$ to
$k=r+1$ in unit steps, the degrees in the interval decrease in unit
steps, so that the RS-type bound increases in unit steps. The same is
true also for \eqref{eq:1dlrcagain}, as the term $\left\lfloor
\frac{k-1}{r} \right\rfloor$ remains unchanged and equals $1$.
\end{proof}

Although Theorem~\ref{thm:boundopt} provides a stronger lower bound, it
is still useful to understand the behavior of the simpler RS-type degree bound
$$
d_k^{(1)}:=\df^2-\partial_k.
$$
The following proposition gives a closed-form bound at
the breakpoint dimensions $k_t$ of Proposition~\ref{prop:degrees}, and
will be used to compare it analytically with the upper bound of
Proposition~\ref{prop:bound}.


\begin{proposition}\label{prop:lbonlb}
For a positive integer $k\leq r^2$, let $d^{(1)}_k:=\df^2-\partial_k$ be
the RS-type degree lower bound on the 
minimum distance. Then for integer $0\leq t\leq 2r-2$ and for the
dimensions $k_t$ of Proposition \ref{prop:degrees}, it holds that 
\begin{equation}\label{eq:profile}
d^{(1)}_{k_t}\geq  \df^2 -2r\df\Big(1-\sqrt{1-\frac{k_t}{r^2}} \Big)=
\Big(\df-\big(r-\sqrt{r^2-k_t}\big) \Big)^2-\big(r-\sqrt{r^2-k_t}\big)^2.  
\end{equation}
\end{proposition}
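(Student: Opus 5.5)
The plan is to reduce \eqref{eq:profile} to an elementary comparison of $\partial_{k_t}$ with $2\df\big(r-\sqrt{r^2-k_t}\big)$, using the explicit formulas of Proposition~\ref{prop:degrees}. First, the equality on the right-hand side of \eqref{eq:profile} is pure algebra. Set $u:=r-\sqrt{r^2-k_t}=r\big(1-\sqrt{1-k_t/r^2}\big)$. Then $(\df-u)^2-u^2=\df^2-2\df u=\df^2-2r\df\big(1-\sqrt{1-k_t/r^2}\big)$. Since $d^{(1)}_{k_t}=\df^2-\partial_{k_t}$, the inequality is therefore equivalent to
$$\partial_{k_t}\le 2\df\big(r-\sqrt{r^2-k_t}\big).$$

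Next, write $s:=t+1\in\{1,\ldots,2r-1\}$. By \eqref{eq:kt},
$$r^2-k_t=r^2-sr+\lfloor s/2\rfloor\lceil s/2\rceil.$$
The key step is the AM--GM type estimate $\lfloor s/2\rfloor\lceil s/2\rceil\le s^2/4$, which holds because the two factors sum to $s$. It gives $r^2-k_t\le (r-s/2)^2$. Since $s\le 2r-1$, we have $r-s/2>0$, so taking square roots yields $\sqrt{r^2-k_t}\le r-s/2$. Hence
$$2\df\big(r-\sqrt{r^2-k_t}\big)\ge s\df=(t+1)\df.$$

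Finally, by \eqref{eq:degreekt},
$$\partial_{k_t}=t\df+r-1-\lfloor (t+1)/2\rfloor\le t\df+r-1<t\df+\df=(t+1)\df,$$
where we used the standing assumption $r\le\df$. Combining this with the previous display gives the required inequality on $\partial_{k_t}$, and hence \eqref{eq:profile}.

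I do not expect a genuine obstacle here. The only point needing care is the square-root step: it requires $r-s/2\ge 0$, which is guaranteed by $t\le 2r-2$. There is also slack to spare, since the bound $r-1<\df$ is crude.
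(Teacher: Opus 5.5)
Your proof is correct and is essentially the paper's argument. Both rest on the same two estimates: $\lfloor s/2\rfloor\lceil s/2\rceil\le s^2/4$, which gives $k_t\ge rs-s^2/4$, and $\partial_{k_t}\le(t+1)\df$. The paper links them by solving a quadratic relating $\df^2-(t+1)\df$ to $rs-s^2/4$, picking the root via $t\le 2r-2$, and then using that $x\mapsto\df^2-2r\df\bigl(1-\sqrt{1-x/r^2}\bigr)$ is decreasing. You instead note directly that $r^2-rs+s^2/4=(r-s/2)^2$ and take square roots, which is a cleaner route to the same inequality.
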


\begin{proof}
Since
$$
\left\lfloor \frac{t+1}{2}\right \rfloor \cdot
\left\lceil\frac{t+1}{2} \right\rceil
\leq \frac{(t+1)^2}{4}
$$
(with equality for odd $t$ and strict inequality for even $t$), we
have
$$
k_t\geq \hat{k}_t:=r(t+1)-\frac{(t+1)^2}{4}.
$$
Also, by Proposition \ref{prop:degrees},
$$
d^{(1)}_{k_t}=
\df^2-\left((t+1)\df-\df+r-1-\left\lfloor\frac{t+1}{2}\right\rfloor 
\right)\geq \df^2-(t+1)\df=:\hat{d}_t.
$$
As $t\leq 2r-2$ and $\hat{d}_t$ is monotonically decreasing in $t$, it
holds that for all relevant $t$,
\begin{equation}\label{eq:dt}
\hat{d}_t\geq \hat{d}_{2r-2}=\df^2-(2r-1)\df\geq
\df^2-2r\df. 
\end{equation}

By definition, $t+1=\df-\hat{d}_t/\df$. Substituting into the expression for
$\hat{k}_t$, results in
$$
\hat{k}_t=r\left(\df-\frac{\hat{d}_t}{\df}\right) -
\frac{1}{4}\left(\df-\frac{\hat{d}_t}{\df}\right)^2 
 = r\df-r\frac{\hat{d}_t}{\df}-\frac{\df^2}{4} +
\frac{\hat{d}_t}{2}-\frac{\hat{d}_t^2}{4\df^2}. 
$$
Hence,
$$
\hat{d}_t^2-4\df^2\left(\frac{1}{2}-\frac{r}{\df}
\right)\hat{d}_t+4\df^2\left(\frac{\df^2}{4}-r\df+\hat{k}_t\right) =0.
$$
Solving for $\hat{d}_t$, we obtain
$$
\hat{d}_t=2\df^2\left(\frac{1}{2}-\frac{r}{\df}\right)\pm
2\sqrt{\df^4\left(\frac{1}{2}-\frac{r}{\df}\right)^2-
\df^2\left(\frac{\df^2}{4}-r\df+\hat{k}_t\right)} =\df^2-2r\df\pm
2r\df\sqrt{1-\frac{\hat{k}_t}{r^2}}. 
$$
Since $\hat{dt}\geq \df^2-2r\df$ for all relevant $t$ by
\eqref{eq:dt}, ``$\pm$'' can be replaced by ``$+$'' in the last
equation.  

The function 
$$
u\colon x\mapsto \df^2
-2r\df\Big(1-\sqrt{1-\frac{x}{r^2}} \Big)
$$
is decreasing for $x\in (0,r^2)$, and we have just proved that
$\hat{d}_t=u(\hat{k}_t)$. Therefore,
$$
d^{(1)}_{k_t}\geq \hat{d}_t = u(\hat{k}_t) \geq u(k_t),
$$
(using $k_t\geq \hat{k}_t$), and we are done.
\end{proof}

It is now useful to compare \eqref{eq:profile} to the upper bound
\eqref{eq:bound}. For $t$ as in Proposition~\ref{prop:lbonlb}, we have
\begin{equation}\label{eq:ccc}
\Big(\sqrt{r^2-k_t}+\df-r
\Big)^2-\big(r-\sqrt{r^2-k_t}\big)^2 \leq d_{k_t}\leq \Big(\Big\lceil
\sqrt{r^2-k_t+1}\Big\rceil+\df-r \Big)^2.
\end{equation}

Thus, although the RS-type degree lower bound is in general not tight
for dimensions close to $r^2$, \eqref{eq:ccc} shows that for smaller
dimensions it gets close to the upper bound. To make this precise, let
$
A:=\Big(\sqrt{r^2-k_t}+\df-r\Big)^2,
B:=\big(r-\sqrt{r^2-k_t}\big)^2,
$
so that the lower bound in \eqref{eq:ccc} is $A-B$, and let
$
C:=\Big(\Big\lceil\sqrt{r^2-k_t+1}\Big\rceil+\df-r \Big)^2
$
be the upper bound.

It holds that, 
$$
\frac{A}{B}=\left(\frac{\df/r}{1-\sqrt{1-\frac{k_t}{r^2}}}-1\right)^2.
$$
This expression is increasing in $\df/r$, and decreasing in
$k_t/r^2$. Hence, if, for example, $\df/r\geq 2$ (local RS codes have
rate at most $1/2$), and $k_t/r^2\leq 0.33$, then we get 
$$
\frac{A}{B}\geq \left(\frac{2}{1-\sqrt{1-0.33}}-1\right)^2\approx
100.
$$
In particular,
$
A-B\geq 0.99A.
$

Also, 
we have
\begin{equation}\label{eq:CoverA}
\frac{C}{A}\leq
\left(\frac{\sqrt{A}+1}{\sqrt{A}}\right)^2 =
\left(1+\frac{1}{\sqrt{A}}\right)^2\leq
\left(1+\frac{1}{0.9\df}\right)^2,
\end{equation}
where in the last inequality we have used the assumptions $\df/r\geq 2$ and $k_t/r^2\leq 0.33$.
For example, when $\df\geq 32$, the right-hand side of \eqref{eq:CoverA} is at most about
$1.1$.

Combining the above estimates gives the following corollary.

\begin{corollary}\label{cor:ninety-percent}
Assume that
$$
\frac{\df}{r}\geq 2,\qquad \frac{k_t}{r^2}\leq 0.33,\qquad
\df\geq 32.
$$
Then
$$
d_{k_t}\geq 0.9\cdot
\Big(\Big\lceil\sqrt{r^2-k_t+1}\Big\rceil+\df-r \Big)^2.
$$
In other words, for such parameters, the lower bound of
Proposition~\ref{prop:lbonlb} is at least $90\%$ of the upper bound of
Proposition~\ref{prop:bound}.
\end{corollary}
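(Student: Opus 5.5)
The plan is to chain the two estimates already obtained in \eqref{eq:ccc} with the ratio bounds $A-B\geq 0.99A$ and $C/A\leq (1+1/(0.9\df))^2$ derived just before the statement. Writing $A,B,C$ as above, Proposition~\ref{prop:lbonlb} together with the identity in \eqref{eq:profile} gives $d_{k_t}\geq d^{(1)}_{k_t}\geq A-B$. Here the first inequality is the RS-type degree bound (or Proposition~\ref{prop:surp} if one prefers to go through Theorem~\ref{thm:boundopt}). It will then suffice to show $A-B\geq 0.9\,C$.

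\textbf{Step 1: the ratio $A/B$.} Under $\df/r\geq 2$ and $k_t/r^2\leq 0.33$, monotonicity of $\frac{A}{B}=\big(\frac{\df/r}{1-\sqrt{1-k_t/r^2}}-1\big)^2$ gives $A/B\geq (2/(1-\sqrt{0.67})-1)^2$. I will check numerically that this is at least $99$ or so; the value is about $98$–$100$, so I would rather claim only $A/B\geq 90$. That already yields $A-B\geq (1-1/90)A\geq 0.988A$. The case $k_t=0$ is excluded, since $k_t\geq r\geq 1$, so $B>0$ and the ratio makes sense.

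\textbf{Step 2: the ratio $C/A$.} Use $\lceil\sqrt{r^2-k_t+1}\rceil\leq \sqrt{r^2-k_t}+1$, which holds since $\sqrt{x+1}\leq\sqrt{x}+1$ and the ceiling adds less than $1$. Hmm, together this may exceed $\sqrt{r^2-k_t}+1$ slightly, so I will argue instead via integers. Since $r^2-k_t$ is an integer $m$, $\lceil\sqrt{m+1}\rceil\leq \lfloor\sqrt m\rfloor+1\leq\sqrt m+1$. This gives $\sqrt C\leq\sqrt A+1$. Next, $\sqrt A=\sqrt{r^2-k_t}+\df-r\geq \df-r\geq \df/2$, and also $\sqrt{r^2-k_t}\geq r\sqrt{0.67}$, so $\sqrt A\geq \df-r(1-\sqrt{0.67})\geq \df(1-0.09)\geq 0.9\df$. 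Hence $C/A\leq (1+1/(0.9\df))^2$, and with $\df\geq 32$ this is at most $(1+1/28.8)^2\approx 1.071$.

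\textbf{Step 3: combine.} We get $d_{k_t}\geq A-B\geq 0.988A\geq 0.988\,C/1.071\geq 0.92\,C\geq 0.9\,C$. The only real obstacle is the bookkeeping of numerical constants, namely confirming that the product of the two slack factors stays above $0.9$, together with the integer-ceiling inequality in Step 2. Both are routine checks that I expect to go through with comfortable margin.
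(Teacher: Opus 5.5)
Your proposal is correct and follows the paper's own argument. You chain $d_{k_t}\ge d^{(1)}_{k_t}\ge A-B$ with $A-B\ge 0.99A$ (from $A/B\ge(2/(1-\sqrt{0.67})-1)^2\approx 100.4$) and $C\le(1+1/(0.9\df))^2A$, and your integer-ceiling argument $\lceil\sqrt{m+1}\rceil\le\lfloor\sqrt m\rfloor+1$ usefully justifies the step $\sqrt C\le\sqrt A+1$, which the paper leaves implicit. (One tiny arithmetic slip: $\tfrac12(1-\sqrt{0.67})\approx 0.0907$, so the intermediate bound should read $\sqrt A\ge 0.909\,\df$ rather than $\df(1-0.09)$; your conclusion $\sqrt A\ge 0.9\,\df$ is unaffected.)
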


\begin{proof}
By Proposition~\ref{prop:lbonlb} and \eqref{eq:ccc},
$$
d_{k_t}\geq A-B.
$$
From the discussion above,
$
A-B\geq 0.99A
$
and
$
C\leq 1.1A.
$
Hence
$$
d_{k_t}\geq 0.99A \geq \frac{0.99}{1.1}C = 0.9C.
$$
\end{proof}

We stress that Corollary~\ref{cor:ninety-percent} is only meant to
illustrate analytically that the construction is provably close to the
upper bound over a broad parameter range. In concrete instances, the
stronger lower bound of Theorem~\ref{thm:boundopt} is typically better, as demonstrated in Section~\ref{sec:eg}.

\section{Instantiation}\label{sec:inst}
Let us now consider a concrete choice for $f$ such that the splitting
field of $f$ and $g=x-f$ is small. Let $q$ be any prime power. Take
some $c\in \ef_{q^2}\smallsetminus \efq$, and note that $c^{q-1}\neq
0,1$. Now define $f_1(x)=x^q-x$, and
$$
g_1(x):=c^q\Big(\Big(\frac{x}{c} \Big)^q-\frac{x}{c}\Big)=x^q-c^{q-1}x.
$$
Then $f_1-g_1=(c^{q-1}-1)x$, so that letting 
$$
f:=\frac{1}{c^{q-1}-1}f_1=\frac{1}{c^{q-1}-1}(x^q-x),
$$
and
$$
g:=-\frac{1}{c^{q-1}-1}g_1 =
\frac{c^{q-1}}{c^{q-1}-1}x-\frac{1}{c^{q-1}-1}x^q, 
$$
we have $f+g=x$. Also, $Z_f=\efq$, while $Z_g=Z_{g_1}=c\efq\subset\ef_{q^2}$. Hence, we may take
$F=\ef_{q^2}$. Since $Z_f$ and $Z_g$ are distinct $1$-dimensional
$\efq$-subspaces of $\ef_{q^2}$, we have $Z_f\cap Z_g=\{0\}$, and therefore
$$
Z_f\oplus Z_g=\ef_{q^2}.
$$
In this instantiation, $\df=\deg(f)=q$, so the codes $C_k$ have length
$
\df^2=q^2=|F|.
$

\section{Examples}\label{sec:eg}
In this section we include some concrete examples for comparing the
upper bounds with the lower bound for the codes $C_k$. In all examples, we take $q$ to be a power of $2$.

\begin{example}
{\rm
In this example, we examine codes of very low rate. Take $\df=32$
and $r=8$. Hence, $C_{r^2}$ is the 
product of two $[32,8]$ RS codes of rate $1/4$, and has rate
$1/16$. Since here $\df=q=32=2^5$, Section~\ref{sec:inst} allows us to work over
$\ef_{q^2}=\ef_{2^{10}}$. Figure \ref{fig:eg1} compares the lower bound \eqref{eq:boundopt}, the upper bound \eqref{eq:gen_bound}, and the upper
bound \eqref{eq:boundv2} of Appendix \ref{app:boundv2}. For 
\eqref{eq:gen_bound}, the minimum over all possible $0\leq a,b\leq r$
was taken.\footnote{We 
comment that it is certainly not required to exhaustively scan over
$r^2$ elements. First, the problem can be easily converted into the
following one dimensional minimization:
$$
\min_{\Big\lceil\frac{r^2-k+1}{r} \Big\rceil\leq a\leq
r}(a+\df-r)\left(\Big\lceil\frac{r^2-k+1}{a}\Big\rceil+\df-r \right). 
$$
Also, the minimization can probably be
further simplified, but since we currently do not see an informative
closed-form expression for the minimum, we will not get into further
details.} 
We see that the lower bound \eqref{eq:boundopt} is not far from the lowest of the upper
bounds throughout the $k$ range. In addition, the lower bound almost coincides with the lowest of  the upper bounds
for, say, $k\leq 30$. We also see that the upper bound
\eqref{eq:gen_bound} is significantly tighter than the bound
\eqref{eq:boundv2} of the appendix for the higher values of $k$, and
slightly less tight for low values of $k$.

\begin{figure}
\begin{subfigure}[h]{0.5\linewidth}
\includegraphics[width=\linewidth]{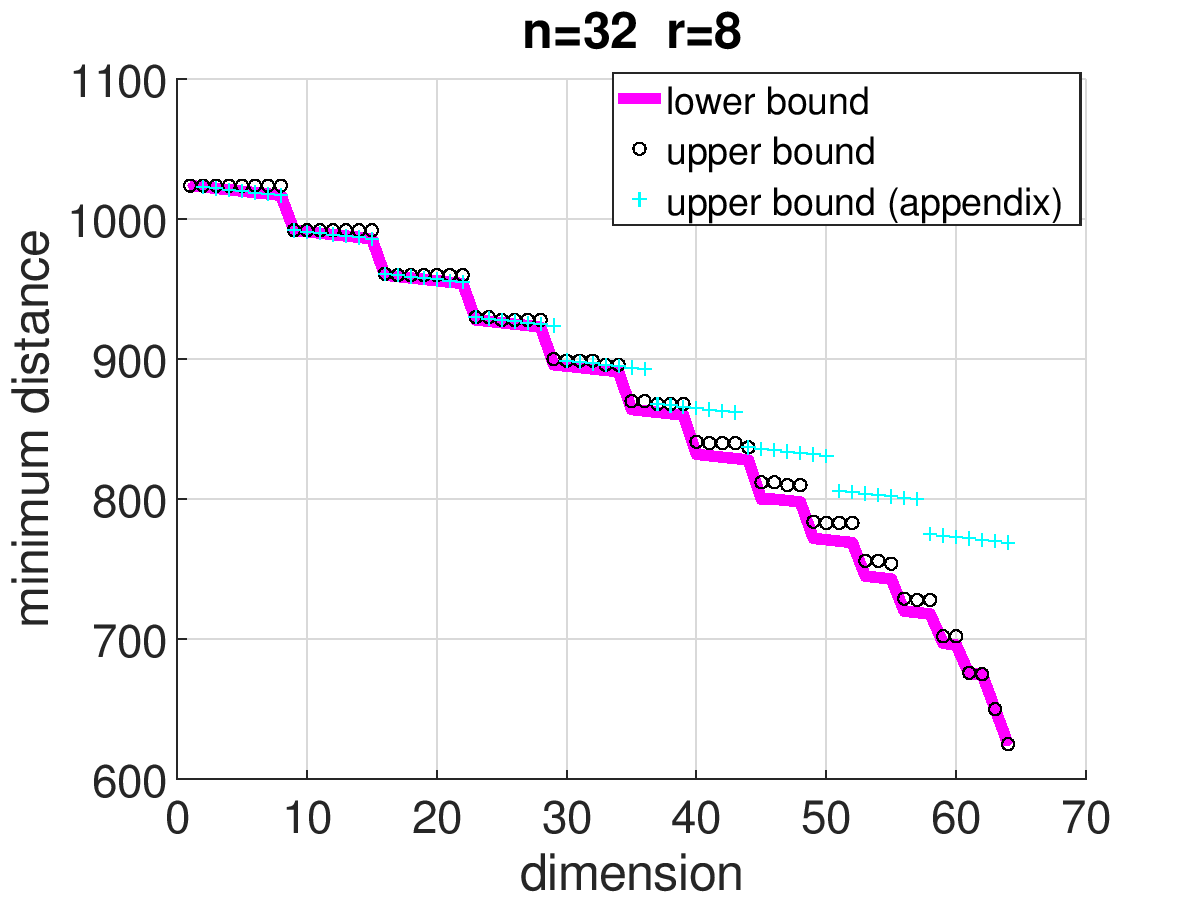}
\caption{Entire $k$ range.}
\end{subfigure}
\hfill
\begin{subfigure}[h]{0.5\linewidth}
\includegraphics[width=\linewidth]{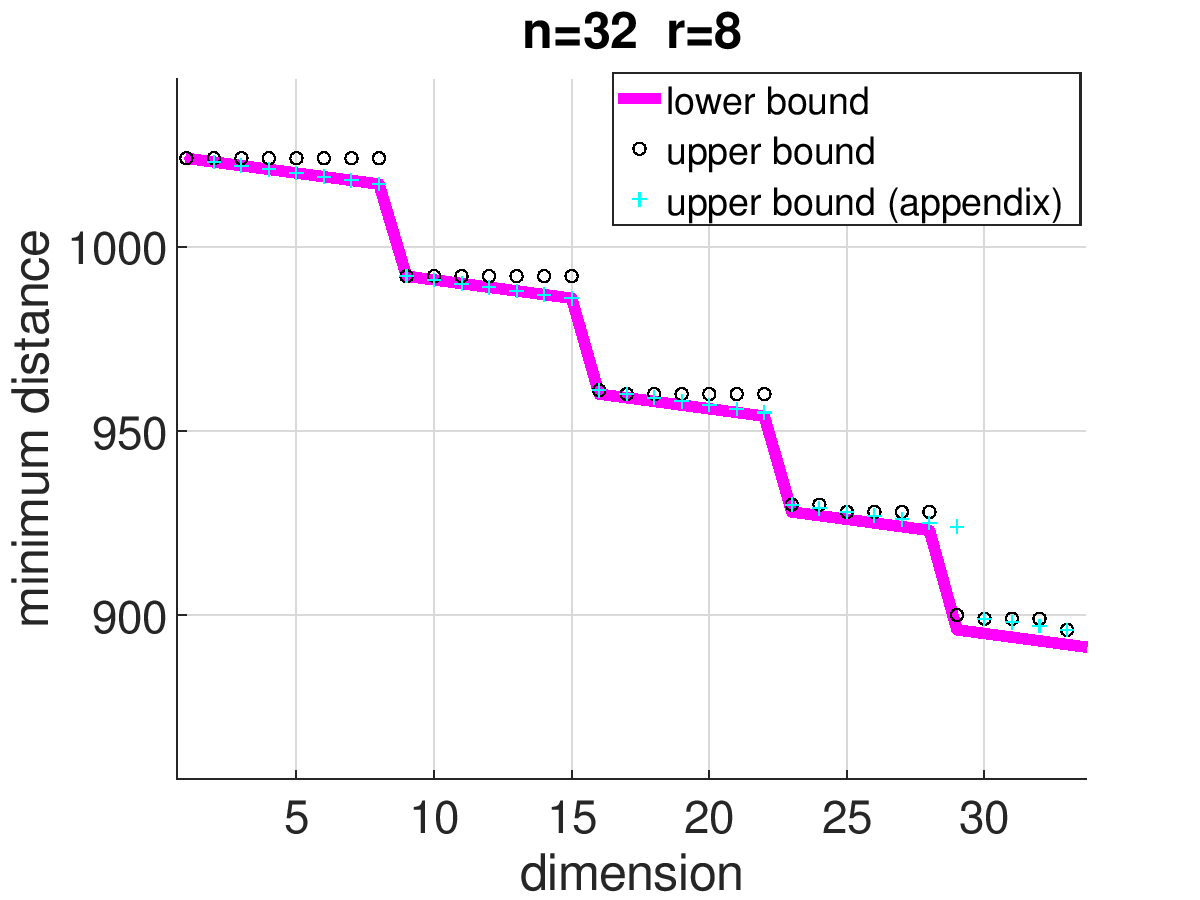}
\caption{Zoom on low $k$ range.}
\end{subfigure}%
\caption{A comparison of the lower bound \eqref{eq:boundopt}, the upper
bound \eqref{eq:gen_bound} (minimum over $a,b$), and the upper bound 
\eqref{eq:boundv2} of Appendix \ref{app:boundv2} for
$(\df,r)=(32,8)$.}\label{fig:eg1} 
\end{figure}
}
\end{example}

\begin{example}
{\rm
Here, we consider two cases where the local codes have
rate $1/2$, so that $C_{r^2}$ has rate $1/4$. In the first case,
$(\df,r)=(32,16)$, while in the second, $(\df,r)=(128,64)$, as in a
suggested coding scheme for data availability sampling in blockchains \cite{AlBassam2018}.\footnote{In this scheme, both tensored RS codes have rate $1/2$. The specific case of a local length of $128$ is one of the examples considered in \cite{AlBassam2018}.} The
underlying finite field can be $\ef_{2^{10}}$ and $\ef_{2^{14}}$,
respectively. Figure \ref{fig:eg2}  
compares the lower bound  \eqref{eq:boundopt}, the upper bound
\eqref{eq:gen_bound}, and the upper bound 
\eqref{eq:boundv2} of Appendix \ref{app:boundv2} for this case.  
For low values of the dimension $k$, the lower bound is very close to the
upper bound. In addition, the exact values at $k=r^2-1$ and $k=r^2-2$
coincide with the upper bound. Also,
even for lower $k$ values, the lower bound is 
not too far from the upper bound. We are not aware of previous explicit constructions in the specific setting considered here, namely explicit large-distance subcodes of RS product codes over a field whose size is linear in the block length. However,
it is possible to compare our construction with the
obvious option of product subcodes of the form $C_1'\otimes C_2'$, where $C_i'$ is a subcode of the $i$-th component RS code. Considering the case of
$(\df,r)=(128,64)$, the highest dimension achievable in
this case by a proper product subcode is $63\cdot 64=4032$, with
distance $(128-63+1)\cdot(128-64+1)=4290$ (compared to $4225$ of the
full product code). In comparison, for a dimension of $4032$, the lower
bound \eqref{eq:boundopt} equals $4940$. We comment that for product
subcodes of lower dimensions, the gain from the current construction
gets increasingly larger. 

\begin{figure}
\begin{subfigure}[h]{0.5\linewidth}
\includegraphics[width=\linewidth]{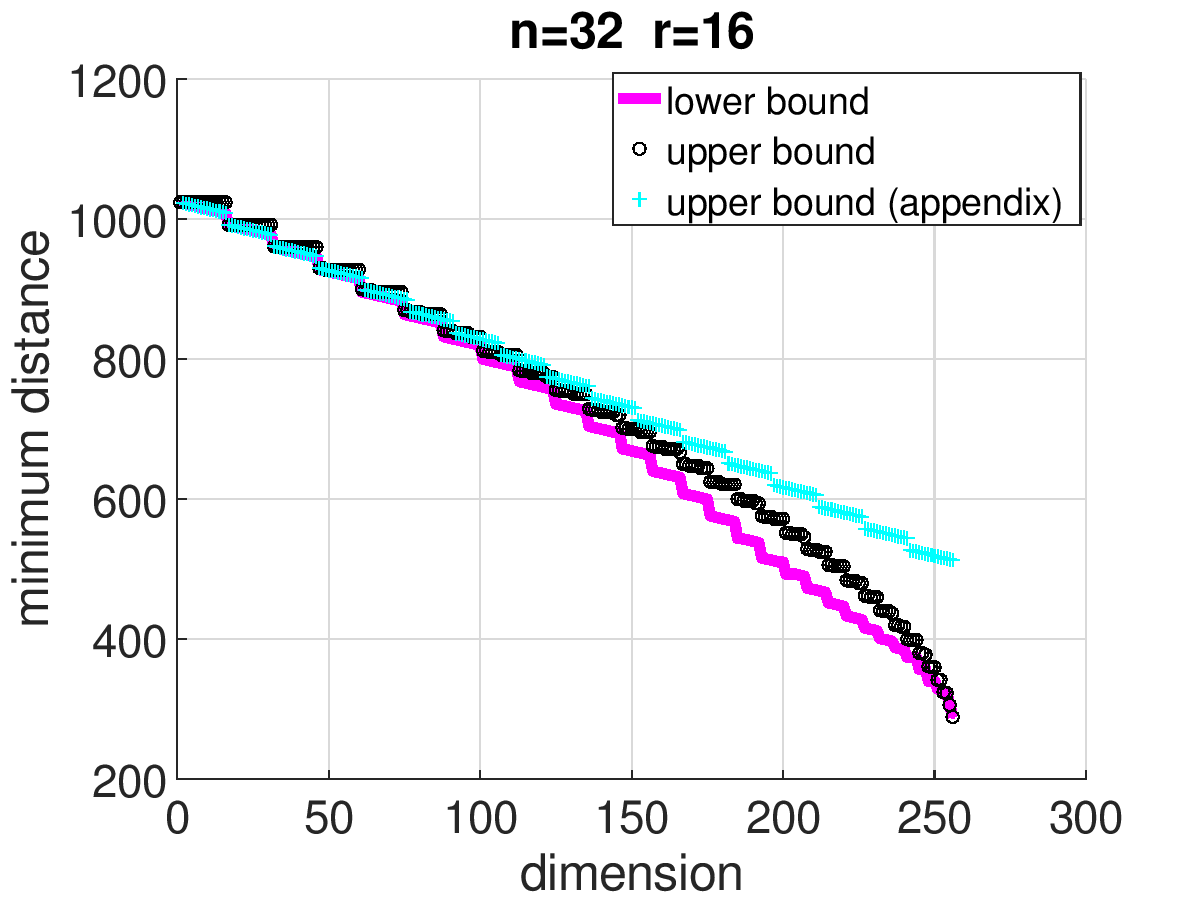}
\caption{Local length $32$, local dimension $16$.}
\end{subfigure}
\hfill
\begin{subfigure}[h]{0.5\linewidth}
\includegraphics[width=\linewidth]{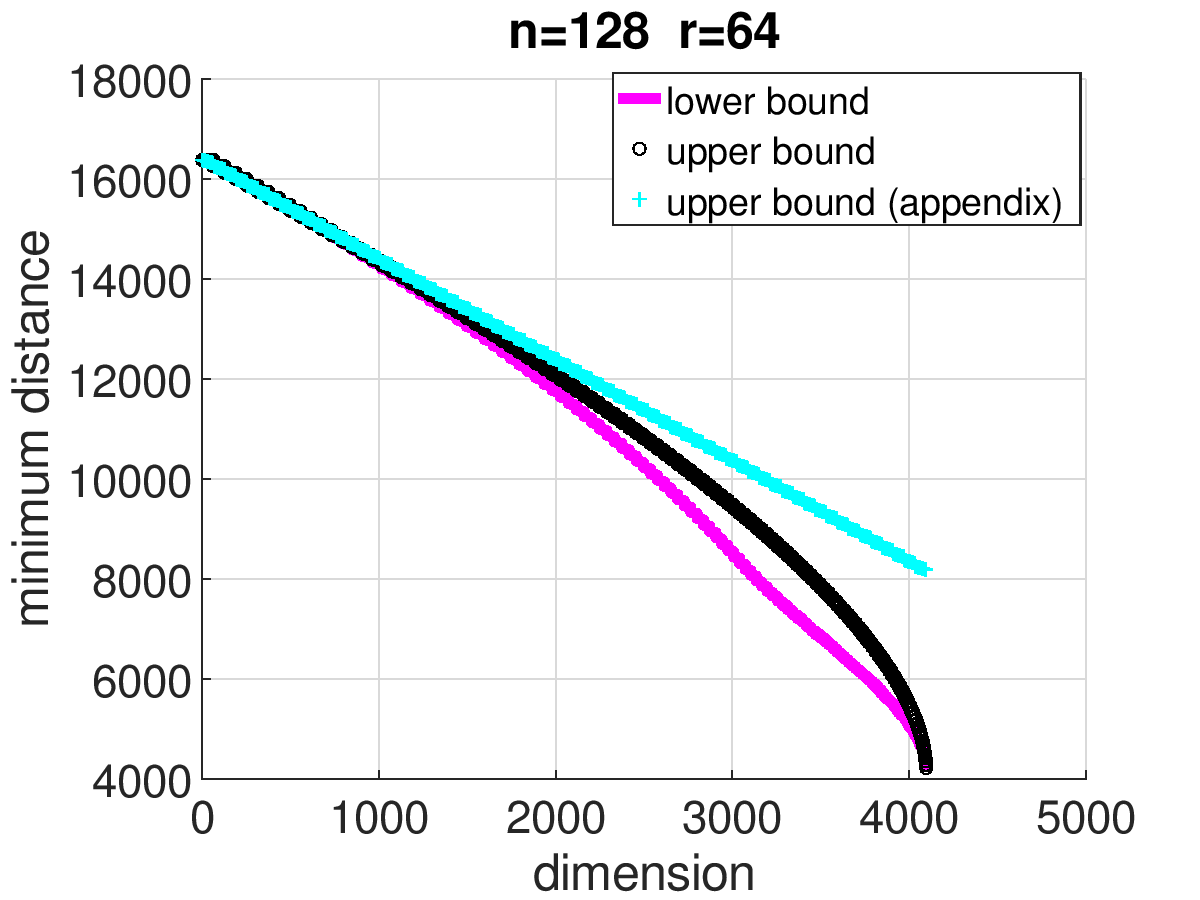}
\caption{Local length $128$, local dimension $64$.}
\end{subfigure}%
\caption{A comparison of the lower bound \eqref{eq:boundopt}, the upper
bound \eqref{eq:gen_bound}  (minimum
over $a,b$), and the upper bound \eqref{eq:boundv2} of Appendix
\ref{app:boundv2} for (a) $(\df,r)=(32,16)$, (b) $(\df,r)=(128,64)$.}
\label{fig:eg2} 
\end{figure}
}
\end{example}

\begin{example}
{\rm
Let us now consider the case where $(\df,r)=(32,25)$, over
$\ef_{2^{10}}$. In this case, 
$C_{r^2}$ has rate slightly larger than $0.6$. 
Figure \ref{fig:eg3}
compares the lower bound \eqref{eq:boundopt}, the upper bound
\eqref{eq:gen_bound}, and the upper bound 
\eqref{eq:boundv2} of Appendix \ref{app:boundv2} for this case. 
Even for the high rate of this example, we see that the lower bound is
close to the upper bound for low values of the dimension $k$. However,
for values of $k$ close to $r^2$, the lower bound is quite far from the upper
bound, except for $k=r^2-2,r^2-1$, where the construction is
optimal. It remains an open question whether the gap is because the
lower or upper bounds are not tight in this case, or because the codes
are indeed far from the highest achievable distance. In the latter
case, it would be interesting to see if better results can be obtained
when using polynomials modulo the annihilator $a(x)$ of $Z_f\oplus
Z_g$, as in Remark \ref{rem:modulo}.

\begin{figure}
\centering
\includegraphics[width=0.5\linewidth]{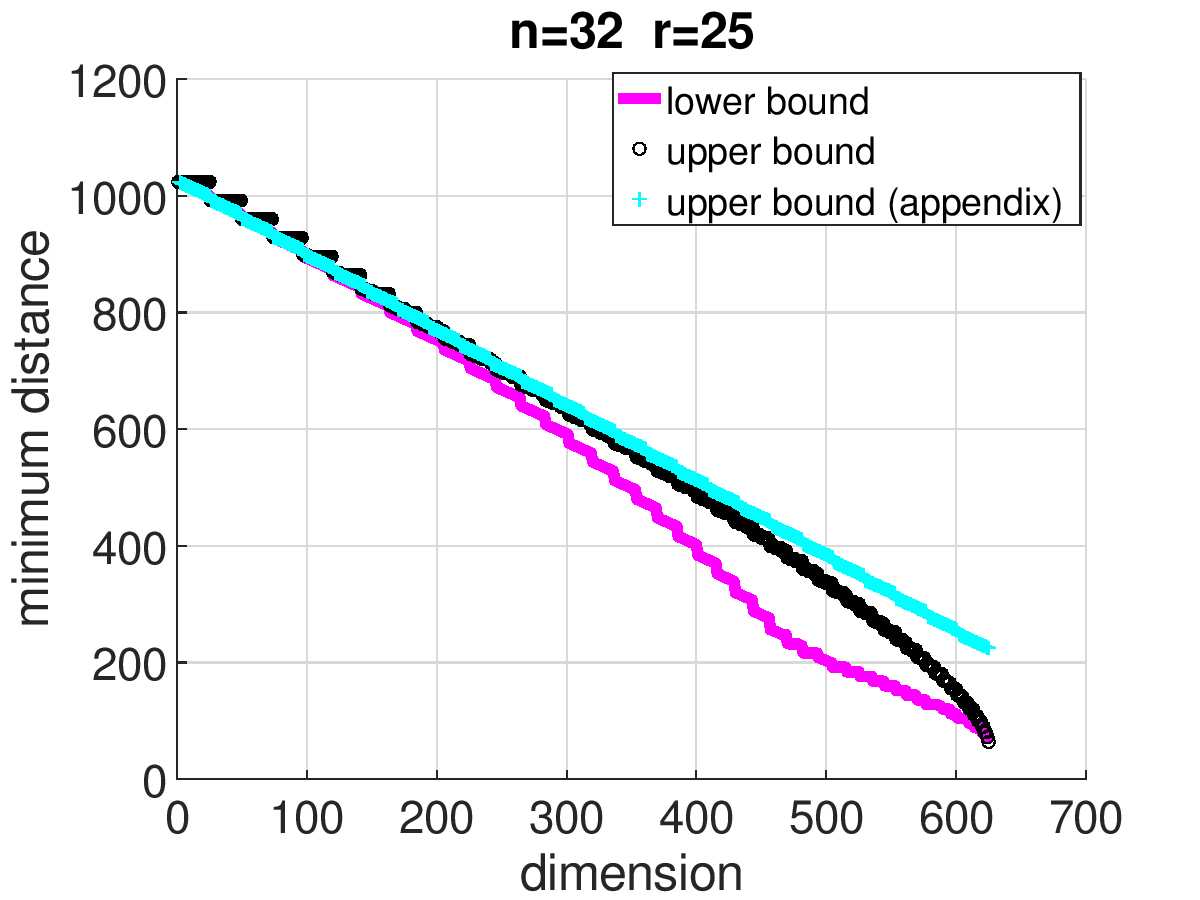}
\caption{A comparison of the lower bound \eqref{eq:boundopt}, the upper bound \eqref{eq:gen_bound} (minimum
over $a,b$), and the upper bound
\eqref{eq:boundv2} of Appendix \ref{app:boundv2} for
$(\df,r)=(32,25)$.} \label{fig:eg3}
\end{figure}
}
\end{example}

\section{Conclusions and open questions}
\label{sec:conc}
We presented an explicit construction of subcodes of the product of
two $[\df,r]$ Reed--Solomon codes (where $\df$ is a prime power) that attains the maximum possible
minimum distance for dimensions $r^2-1$, $r^2-2$, and all dimensions at most 
$ 2r-1$. Moreover, when $r/\df \leq 1/2$, the minimum distance
of our construction is provably close to optimal over a broad range of
dimensions; for example, it attains at least $90$\% of the maximum
possible distance at interval endpoints when $k/r^2 \leq 0.33$. Our construction requires a field whose size is bounded linearly by the overall product code length. Complementing this explicit construction, we established a new upper bound on the minimum distance that explicitly captures the rigid intersecting grid topology of product codes for arbitrary local distances.

Several interesting questions remain open for future research. First,
the tightness of the upper bound \eqref{eq:gen_bound} calls for further
investigation. The results of Section \ref{sec:5} show that this
bound is attained for dimensions $k = r^2, r^2-1, r^2-2$, and we
believe that it remains nearly tight for values of $k$ close to
$r^2$. 
More generally, we speculate that the minimum of the upper bound \eqref{eq:gen_bound} and the bound \eqref{eq:boundv2} from Appendix \ref{app:boundv2} is close to the maximum attainable distance for all dimensions $k$ at which this minimum strictly decreases from its value at $k-1$.

A second question concerns the tightness of the lower bound in
Theorem \ref{thm:boundopt} for the codes $C_k$. Are there values of $k$ for
which the true minimum distance of $C_k$ is strictly larger than this
bound? 
It would also be of interest to derive general lower bounds,
possibly via probabilistic methods. Notably, the explicit construction
presented here requires a field size comparable to the code length. It is natural to ask whether permitting larger field sizes can lead to improved distances for certain dimensions, in analogy with the construction in~\cite{WZ14} for local distance~$2$, where a field size exponential in the code length is required.

Finally, the problem of efficient decoding up to half the minimum
distance remains open. The product code $C_{r^2}$ can be decoded up to
half its minimum distance using generalized minimum distance (GMD)
decoding \cite{forneyGMD}; see \cite[Sec.~12.2]{roth} and
\cite[pp.~178--179]{roth}. It is an interesting question whether
similarly efficient decoding algorithms exist for $C_k$ when $k <
r^2$. 

\appendix

\section{Proof of Theorem \ref{thm:max-deg-basis}}\label{app:proof}
We will
first need the following lemma. 

\begin{lemma}\label{lemma:sum}
For $t\in\{0,1,\ldots,2r-2\}$, let $I_t:=\{t\df+\ell |0\leq \ell\leq
r-1-\lceil t/2\rceil\}$. Then 
\begin{equation}\label{eq:sum}
\Big|\bigcup_{j=0}^t I_j\Big|= \sum_{j=0}^t |I_j| =
(t+1)r-\left\lfloor \frac{t+1}{2}\right \rfloor 
\cdot \left\lceil\frac{t+1}{2} \right\rceil.
\end{equation}
\end{lemma}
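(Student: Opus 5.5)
The proof has two parts: first show the intervals $I_j$ are pairwise disjoint, so that the union has size equal to the sum; then evaluate the sum by pairing consecutive indices.

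\emph{Disjointness.} Each $I_j$ is contained in $[j\df,\, j\df+r-1]$. Since $r\leq \df$, we have $r-1<\df$, so these ranges for distinct $j$ do not overlap. Moreover, every $I_j$ with $0\le j\le 2r-2$ is nonempty, because $r-1-\lceil j/2\rceil\ge r-1-(r-1)=0$. It follows that $|I_j|=r-\lceil j/2\rceil$, and the first equality in \eqref{eq:sum} holds.

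\emph{The sum.} We need
\[
\sum_{j=0}^t\Big(r-\lceil j/2\rceil\Big)=(t+1)r-\sum_{j=0}^t\lceil j/2\rceil ,
\]
so it suffices to show that $S_t:=\sum_{j=0}^t\lceil j/2\rceil=\lfloor (t+1)/2\rfloor\lceil (t+1)/2\rceil$. We do this by induction on $t$, separating the parity of $t$.

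\begin{itemize}
\item Base case: $S_0=0=0\cdot 1$.
\item Odd $t=2m-1$: the values $\lceil j/2\rceil$ for $j=0,\dots,2m-1$ are $0,1,1,2,2,\dots,m-1,m-1,m$. Grouping them as $\{2i-1,2i\}$ gives $S_{2m-1}=2\binom{m}{2}+m=m^2$, which matches $\lfloor m\rfloor\lceil m\rceil=m^2$.
\item Even $t=2m$: we get $S_{2m}=m^2+m=m(m+1)$, which matches $\lfloor (2m+1)/2\rfloor\lceil (2m+1)/2\rceil=m(m+1)$.
\end{itemize}

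Alternatively, one can check the increment $\lfloor (t+2)/2\rfloor\lceil (t+2)/2\rceil-\lfloor (t+1)/2\rfloor\lceil (t+1)/2\rceil=\lceil (t+1)/2\rceil$ directly in both parity cases.

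The disjointness step is where the hypothesis $r\le \df$ is used, and it is the only non-mechanical point. The remainder is routine bookkeeping with floors and ceilings, and we expect no real obstacle there.
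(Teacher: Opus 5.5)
Your proof is correct and follows essentially the same route as the paper. You both get disjointness and nonemptiness of the $I_j$ from $r\le \df$ and $\lceil j/2\rceil\le r-1$, then evaluate $\sum_{j\le t}\lceil j/2\rceil$ by parity (the paper pairs terms for even $t$ and adds one term for odd $t$, while you do the reverse, and what you call induction is really a direct computation in each parity class).
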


\begin{proof}
As $r\leq \df$, distinct sets $I_j$ are disjoint, and the first 
equality in \eqref{eq:sum} follows. Next, noting that $\lceil
t/2\rceil\leq r-1$ so that no $I_j$ is empty,
$$
\sum_{j=0}^t |I_j| 
 = \sum_{j=0}^t \Big(r- \left\lceil \frac{j}{2} \right\rceil\Big)
 = (t+1)r-\sum_{j=1}^t  \left\lceil \frac{j}{2} \right\rceil.
$$
Now, if $t$ is even, then
$$
\sum_{j=1}^t \left\lceil \frac{j}{2} \right\rceil =
1+1+2+2+\cdots+\frac{t}{2}+\frac{t}{2}=\frac{t}{2}\Big(\frac{t}{2}+1\Big),
$$
while if $t$ is odd, we can use the above expression to sum up to
$t-1$ (even when $t-1=0$), and add $(t+1)/2$ to obtain
$$
\frac{t-1}{2}\Big(\frac{t-1}{2}+1
\Big)+\frac{t+1}{2}=\Big(\frac{t+1}{2}\Big)^2. 
$$
This completes the proof.
\end{proof}

We can now turn to the proof of the theorem.
\begin{proof}[Proof of Theorem \ref{thm:max-deg-basis}]
By Lemma \ref{lemma:sum}, the
number of elements on the right-hand side of \eqref{eq:d1d2} is
$r^2=|B_{r^2}|=|D|$, and it is therefore sufficient to prove
that the right-hand side of \eqref{eq:d1d2} is contained in $D$. 
Let 
$$
D_1:=\big\{t \df + s-t \big|\ 0 \leq t \leq s \leq r-1 \big\}, 
$$
and
$$
D_2:=\big\{t \df + s-t\big| r\leq s\leq 2(r-1),\ 2(s - r+1)
\leq t \leq s \big\}. 
$$
We will first prove that 
\begin{equation}\label{eq:supset}
D\supseteq D_1\cup D_2,
\end{equation} 
and then verify
that the right-hand side of \eqref{eq:d1d2} equals $D_1\cup D_2$.

Define $P_{ij}:=f(x)^ig(x)^j=f(x)^i(x-f(x))^j$ for $i,j\in\{0,\ldots,
r-1\}$. All polynomials $P_{ij}$ with the same sum of indices $s:=i+j$
have the same degree. We will show that $D_1\cup D_2$ is the set of
degrees obtained when putting all ``slices'' $X_s:=\{P_{ij}|0\leq i,j\leq
r-1, i+j=s\}\subseteq B_{r^2}$, $0\leq s\leq 2(r-1)$, in REF. 

Since
$$
f^i(x-f)^j=\sum_{\ell=0}^j
\binom{j}{\ell}(-1)^{j-\ell}x^{\ell}f^{i+j-\ell}
=\sum_{\ell=0}^j\binom{j}{\ell}(-1)^{j-\ell}x^{\ell}f^{s-\ell}, 
$$
it is clear that for all $s$, 
\begin{equation}\label{eq:xs}
X_s\subseteq\linspan_F\{f^tx^{s-t}|0\leq t\leq s\}.
\end{equation}
Write $Y_s:=\{f^tx^{s-t}|0\leq t\leq s\}$ for short. We note that the
polynomials in $Y_s$ have distinct degrees, and are therefore linearly
independent.   

Let us now distinguish between two cases. If $s\leq r-1$, then the set
of pairs $(i,j)$ with $i,j\leq r-1$, $i+j=s$, is
$\{(0,s),(1,s-1),\ldots, (s,0)\}$, and has cardinality
$s+1=|Y_s|$. Hence, $|X_s|=|Y_s|$, and since
$X_s$ is linearly independent, it follows from \eqref{eq:xs} that
$\linspan_F(X_s)=\linspan_F(Y_s)$. Running over all $0\leq s\leq r-1$,
this accounts exactly for the degrees specified in $D_1$.

In the second case, $r\leq s\leq 2(r-1)$. Here, the set of pairs
$(i,j)$ with $i,j\leq r-1$, $i+j=s$, is 
$\{(s-(r-1),r-1),(s-(r-1)+1,r-2),\ldots,(r-1,s-(r-1)\}$, and has
cardinality $2r-1-s\leq r-1<s+1=|Y_s|$. Hence, we have a strict inclusion
$\linspan_F(X_s)\subset\linspan_F(Y_s)$. 

Consider the representation of a
non-zero polynomial in $\linspan_F(X_s)$ as a linear combination of
elements of $Y_s$. We
claim that this representation must include an element $f^ix^{s-i}$ for
some $i\geq s-|X_s|+1=2(s-r+1)$ with a non-zero coefficient. Indeed, assume by
contradiction that there exist scalars $\alpha_t\in F$, $s-(r-1)\leq
t\leq r-1$, not all zero, and $\beta_i\in F$, $0\leq i\leq 2s-2r+1$,
such that 
\begin{equation}
\label{eq:stam}
\sum_{t=s-(r-1)}^{r-1}\alpha_tP_{t,s-t}=\sum_{i=0}^{2s-2r+1}\beta_if^ix^{s-i}.
\end{equation}
Since $X_s$ is linearly independent and the $\alpha_t$ are not all zero, the left-hand side is a non-zero polynomial. Consequently, the right-hand side is also non-zero and has degree at most $(2s-2r+1)\df+2r-s-1$. However, the
left-hand side is divisible by $\big(f(x-f)\big)^{s-(r-1)}$, and
therefore has degree at least 
$$
2(s-r+1)\df=(2s-2r+1)\df+\df>(2s-2r+1)\df+r-1\geq (2s-2r+1)\df+2r-s-1,
$$
(since $s\geq r$), and we arrive at a contradiction. 

Let $\bs{x}_s$ be a column vector 
 whose entries are the polynomials
in $X_s$, and let $\bs{y}_s:=(x^s,fx^{s-1},\ldots,f^s)^T$ be a column
vector whose entries are the polynomials of $Y_s$ in increasing degree
order. Let $M\in
F^{|X_s|\times |Y_s|}=F^{(2r-1-s)\times (s+1)}$ be the matrix such
that $\bs{x}_s=M\bs{y}_s$, and write $M=(M_1,M_2)$, where $M_2\in
F^{|X_s|\times |X_s|}$. We have just proved that for all non-zero column vector $\bs{u}\in F^{|X_s|}$, in the representation $\bs{u}^T\bs{x}_s=(\bs{u}^T M_1,\bs{u}^T M_2)\bs{y}_s$, the last $|X_s|$ coordinates form a non-zero vector, that is, $\bs{u}^T M_2\neq 0$. It follows that $M_2$ is of full
rank. Considering $M_2^{-1}\bs{x}_s=(M_2^{-1}M_1,I)\bs{y}_s$, it follows that the set of
degrees of $X_s$ in REF is the set of degrees of
$\{f^tx^{s-t}|2(s-r+1)\leq t\leq s\}$. The union of these  sets of
degrees over $r\leq s\leq 2(r-1)$ is exactly $D_2$. This 
concludes the proof of \eqref{eq:supset}. 

Now, the conditions $0\leq t\leq s\leq r-1$ in the definition of $D_1$ are easily seen to be
equivalent to $0\leq t\leq r-1$ and $0\leq s-t\leq r-1-t$, so that 
$$
D_1=\{t\df+\ell|0\leq t\leq r-1, 0\leq \ell\leq r-1-t\}.
$$ 
For a similar description of $D_2$, we note first that in the
definition $D_2$, $t$ can take any value in $\{2,3,\ldots, 2(r-1)\}$.\footnote{The values $2,\ldots,r$ are attained for $s=r$, while the values $r+1,\ldots,2(r-1)$ are attained, e.g., as the largest $t$ for $s=r+1,\ldots,2(r-1)$.} Now
$s\geq r$ and $s\geq t$ is equivalent to $s-t\geq\max\{0,r-t\}$. Also,
the condition $2(s-r+1)\leq t$ is equivalent to $s-t\leq r-1-t/2$. We also note that since 
$r-1+t/2\leq 2(r-1)$ for $t\leq 2(r-1)$, the remaining 
condition $s\leq 2(r-1)$ is automatically satisfied. Hence,
$$
D_2=\Big\{t\df+\ell\Big|2\leq t\leq 2(r-1),\max\{0,r-t\}\leq \ell\leq
r-1-\left\lceil\frac{t}{2}\right\rceil\Big\}.
$$

It is now straightforward to verify that $D_1\cup D_2$ is equal to the
right-hand side of \eqref{eq:d1d2}:
For $t\geq r$, only $D_2$ contributes intervals to the union $D_1\cup
D_2$, and the lower limit of $\ell$ in $D_2$ equals $0$ for such $t$,
as in \eqref{eq:d1d2}. For $2\leq t\leq r-1$, both $D_1$ and 
$D_2$ contribute to the $t$-th interval, where the contribution of
$D_1$ is $\{t\df, \ldots, t\df+r-1-t\}$, while the contribution
of $D_2$ is $\{t\df+r-t,\ldots,t\df+r-1-\left\lceil
\frac{t}{2}\right\rceil\}$. The remaining cases of $t=0,1$, with
contribution only from $D_1$, are also easily verified. 
\end{proof}

\section{An additional upper bound}\label{app:boundv2}

\begin{proposition}\label{prop:boundv2}
Let $C_1,C_2$ be two linear codes of length
$n$ and minimum distance at least $\delta=n-r+1$ (for integers
$2\leq r\leq n-1$), and let 
$C\subseteq C_1\otimes C_2$ be a linear subcode. Write
$k:=\dim(C)\leq r^2$ and $d:=\min.\mathrm{dist}(C)$. Then for $k\geq
2$, it holds that  
\begin{equation}\label{eq:boundv2}
d\leq n^2-k+1-\left\lfloor \frac{k-2}{r-1} \right\rfloor (\delta-1).
\end{equation}
\end{proposition}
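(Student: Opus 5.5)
The plan is a greedy erasure-pattern argument in the spirit of Kamath \emph{et al.}, but exploiting that a product code admits a row \emph{and} a column recovery. I expect this second direction to be why the denominator is $r-1$ rather than $r$. The goal is to build a set $S$ of coordinates with $|S|\geq \lfloor (k-2)/(r-1)\rfloor(\delta-1)+(k-1)$ such that $\dim(C|_S)\leq k-1$. The standard argument then applies: some nonzero codeword vanishes on $S$, and enlarging $S$ to an information-deficient set of rank exactly $k-1$ only helps. This gives $d\leq n^2-|S|$, which is the claimed bound.

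\textbf{Step 1: build $S$ as a union of rectangles.} Let $m$ be a parameter to be chosen. I will take $S$ to be the union of full columns indexed by a set $J$ of size $m$ and full rows indexed by a set $I$ of size $m$, possibly adjusted at the end.

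\textbf{Step 2: bound the rank.} The rank of $C_1\otimes C_2$ on such a union is at most
\[
mr+mr-m^2,
\]
since restricting to the rows $I$ and columns $J$ gives the product code minus its restriction to the complement, of dimension $(r-m)^2$.

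\textbf{Step 3: count the gain per step.} A cleaner way to see the gain is to proceed one index at a time, alternating between a new column and a new row. Each added line contributes $n$ coordinates. On the crossing points, the rank increases by at most $r-m_{\text{other}}$, where $m_{\text{other}}$ is the number of lines already taken in the other direction. Adding the first column costs rank $r$ and $n$ coordinates. Adding the first row then costs rank at most $r-1$, but only $n-1$ new coordinates, because one coordinate is shared with the column. So the step gains $(n-1)-(r-1)=\delta-1$ excess coordinates over rank. If instead we take only columns (the Kamath argument), each column gives $n$ coordinates for rank $r$, so $\delta-1$ excess per $r$ rank. The trick to get $r-1$: after fixing one row $R_0$ in $S$ (rank $r$ used, $n$ coordinates), each further column $j$ adds $n-1$ new coordinates and rank at most $r-1$, because its intersection with $R_0$ is already determined. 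So each column yields excess $(n-1)-(r-1)=\delta-1$ per $r-1$ units of rank.

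\textbf{Step 4: assemble the count.} Start with one full row, whose rank is at most $r$; shrink it to rank exactly $\min(r,k-1)$ when $k$ is small. Then greedily add columns while the rank stays at most $k-1$. The number of full columns that can be added is $\lfloor (k-1-r)/(r-1)\rfloor$. Since $\lfloor (k-2)/(r-1)\rfloor=1+\lfloor (k-1-r)/(r-1)\rfloor$, the row itself should count as one ``step''. The row gives $n$ coordinates for rank $r$, which is excess $n-r=\delta-1$, consistent. Finally, top up with individual coordinates until the rank is exactly $k-1$. Each such coordinate raises the rank by at most one, so the excess is preserved. This yields $|S|\geq k-1+\lfloor (k-2)/(r-1)\rfloor(\delta-1)$.

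\textbf{Main obstacle.} The delicate part is the edge bookkeeping. When $k-1<r$, the first row must be handled partially: take $k-1$ independent coordinates, giving floor $0$ when $k-2<r-1$, which is consistent. When the rank of a column restricted to the remaining coordinates is strictly less than $r-1$, the excess only increases, and the hypothesis $\delta\leq n$ with $r\leq n-1$ ensures the added lines are well defined. I expect verifying the floor identity across these cases to be the main care point.
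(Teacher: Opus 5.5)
Your argument is correct and is essentially the paper's proof, phrased as a rank-deficient set rather than an erasure pattern and with rows and columns swapped: the complement of the paper's black region is one full column, $\lfloor (k-2)/(r-1)\rfloor-1$ full rows, and a partial row with $(k-2)\bmod(r-1)$ further coordinates, and each row after the column adds $n-1$ coordinates at a cost of only $r-1$ dimensions — exactly your one-row-plus-columns-plus-top-up set $S$. The union of $m$ rows and $m$ columns set up in Steps 1--2 is never used in Step 4 and can be dropped.
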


\begin{proof}

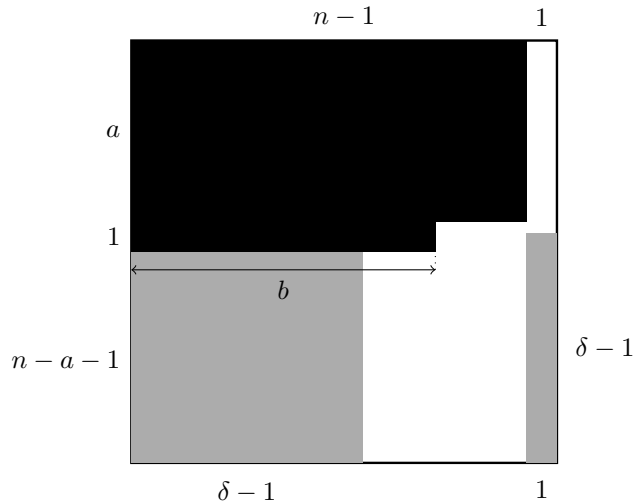
\begin{figure}[h!tbp]
\centering
\begin{tikzpicture}[ 
    every node/.style={font=\footnotesize\sffamily}, 
    thin/.style={line width=0.35pt},
  ]

  \draw[line width=0.9pt] (0,0) rectangle (7*\bbb,7*\bbb);

  \fill[gray!65] (0,0) rectangle (3.8*\bbb,3.5*\bbb);       
  \fill[gray!65] (6.5*\bbb,0) rectangle (7*\bbb,3.8*\bbb);  

  \fill[black] (0,4*\bbb) rectangle (6.5*\bbb,7*\bbb);         
  \fill[black] (0,3.5*\bbb) rectangle (5*\bbb,4*\bbb);   

  \draw[<->] (0,3.2*\bbb) -- (5*\bbb,3.2*\bbb);
  \node[below=0mm of {(2.5*\bbb,3.2*\bbb)}]   {$b$};
  \draw[dotted] (5*\bbb,3.2*\bbb) -- (5*\bbb,3.5*\bbb);

  \node[left=8mm of {(\bbb,5.5*\bbb)}]   {$a$};
  \node[left=8mm of {(\bbb,3.75*\bbb)}]   {$1$};
  \node[left=8mm of {(\bbb,1.7*\bbb)}] {$n-a-1$};

  \node[right=1mm of {(7*\bbb,1.9*\bbb)}]   {$\delta-1$};

  \node[below=1mm of {(1.9*\bbb,0)}]     {$\delta-1$};      
2aaa column
  \node[below=1mm of {(6.75*\bbb,0)}]     {$1$};      

  \node[above=0.5mm of {(6.75*\bbb,7*\bbb)}]     {$1$};      
  \node[above=0.5mm of {(3.5*\bbb,7*\bbb)}]     {$n-1$};      

\end{tikzpicture}
\caption{The erasure pattern for the proof of Proposition \ref{prop:boundv2}.}
\label{fig:patternv2}
\end{figure}

First note that if $2\leq k<r+1$, then \eqref{eq:boundv2} reduces to the
Singleton bound, so that we may assume without loss of generality that
$k\geq r+1$.  

The proof follows the lines of the proof of Proposition
\ref{prop:bound}, but with a different erasure pattern, as described
in Figure \ref{fig:patternv2}. In the figure, $a,b\leq n-1$ are
non-negative integers. 
The black part consists of an $a\times (n-1)$ rectangle together with
an additional black strip of length $b$, so its total size is
$a(n-1)+b$. The gray part consists of a rectangle of size
$(n-a-1)\times (\delta-1)$ together with a column of height
$\delta-1$.

As in the proof of Proposition
\ref{prop:bound}, if an erasure pattern on the black part can be
decoded, then also an erasure pattern on both the black and gray parts
can be decoded (first the right column of height $\delta-1$ is decoded
by local decoding, then the left gray rectangle, row-by-row, by local
decoding, and finally the black part). 

Hence, for non-negative integers $a,b\leq n-1$, if 
$$
\underbrace{a(n-1)+b}_{\mathrm{black}}+\underbrace{(n-a-1)(\delta-1) +
\delta-1}_{\mathrm{gray}}\geq n^2-k+1,
$$
then $d\leq a(n-1)+b$. In other words, for
non-negative integers $a,b\leq n-1$, it holds that 
\begin{equation}\label{eq:impliesv2}
a(r-1)+b\geq nr-k+1\quad \implies\quad d\leq a(n-1)+b.
\end{equation}

To minimize the resulting upper bound $a(n-1)+b$ under the constraints $0\leq a,b\leq n-1$,
$a(r-1)+b\geq nr-k+1$, it is natural to take $a$ as small as possible
and then choose $b$ so that equality holds.
Consider the choice of 
$$
a=n-\left\lfloor\frac{k-2}{r-1} \right\rfloor
$$
and $b=nr-k+1-a(r-1)$.
Note first that since $k-2\geq r-1$ by assumption, it holds that
$a\leq n-1$. Also, as $k\leq r^2$ and $r\leq n$, we have $a\geq 0$. 
In addition, 
\begin{align}
b &= nr-k+1-\Big(n-\left\lfloor\frac{k-2}{r-1} \right\rfloor
\Big)(r-1)\nonumber\\ 
&= n-k+1 +(r-1)\left\lfloor\frac{k-2}{r-1} \right\rfloor\label{eq:bbb}\\
&= n-1-(k-2)+(r-1)\left\lfloor\frac{k-2}{r-1} \right\rfloor\nonumber\\
&= n-1 - (k-2)\bmod(r-1),\nonumber
\end{align}
and therefore $0\leq b\leq n-1$. By the definition of $b$, the
condition in \eqref{eq:impliesv2} is satisfied, and therefore
\begin{align*}
d&\leq \Big(n-\left\lfloor\frac{k-2}{r-1} \right\rfloor\Big)(n-1) +
n-k+1 +(r-1)\left\lfloor\frac{k-2}{r-1}\right\rfloor\\
&=n^2-k+1-\left\lfloor \frac{k-2}{r-1} \right\rfloor (\delta-1),
\end{align*}
where we have used the description of $b$ from \eqref{eq:bbb}. 
\end{proof}

\begin{remark}
{\rm
Clearly, the Singleton-like bound \eqref{eq:1dlrc} for a single
recovering set holds also for the product code of the proposition. The
only difference  is that in \eqref{eq:boundv2}, $\lfloor
(k-2)/(r-1)\rfloor$ appears instead of $\lfloor (k-1)/r\rfloor$. 
}
\end{remark}

\bibliographystyle{alpha}
	\bibliography{bibliography_a}

@article{Elias54,
  author       = {Peter Elias},
  title        = {Error-free Coding},
  journal      = {Trans. {IRE} Prof. Group Inf. Theory},
  volume       = {4},
  pages        = {29--37},
  year         = {1954},
  url          = {https://doi.org/10.1109/TIT.1954.1057464},
  doi          = {10.1109/TIT.1954.1057464},
  bibsource    = {dblp computer science bibliography, https://dblp.org}
}

@article{kathy,
  author       = {Kathryn Haymaker and
                  Beth Malmskog and
                  Gretchen L. Matthews},
  title        = {Locally recoverable codes with availability \emph{t}{\(\geq\)}2 from
                  fiber products of curves},
  journal      = {Adv. Math. Commun.},
  volume       = {12},
  number       = {2},
  pages        = {317--336},
  year         = {2018},
  url          = {https://doi.org/10.3934/amc.2018020},
  doi          = {10.3934/AMC.2018020},
  bibsource    = {dblp computer science bibliography, https://dblp.org}
}

@inproceedings{paolo,
  author       = {Paolo Santini and
                  Giulia Rafaiani and
                  Massimo Battaglioni and
                  Franco Chiaraluce and
                  Marco Baldi},
  title        = {Optimization of a {R}eed-{S}olomon code-based protocol against blockchain
                  data availability attacks},
  booktitle    = {2022 {IEEE} International Conference on Communications Workshops,
                  {ICC} Workshops 2022, Seoul, Korea, May 16-20, 2022},
  pages        = {31--36},
  publisher    = {{IEEE}},
  year         = {2022},
  url          = {https://doi.org/10.1109/ICCWorkshops53468.2022.9814692},
  doi          = {10.1109/ICCWORKSHOPS53468.2022.9814692},
  bibsource    = {dblp computer science bibliography, https://dblp.org}
}

@ARTICLE{Cai2020,

  author={Cai, Han and Miao, Ying and Schwartz, Moshe and Tang, Xiaohu},

  journal={IEEE Transactions on Information Theory}, 

  title={On Optimal Locally Repairable Codes With Multiple Disjoint Repair Sets}, 

  year={2020},

  volume={66},

  number={4},

  pages={2402-2416},

  doi={10.1109/TIT.2019.2944397}}

@ARTICLE{Wang2014,

  author={Wang, Anyu and Zhang, Zhifang},

  journal={IEEE Transactions on Information Theory}, 

  title={Repair Locality With Multiple Erasure Tolerance}, 

  year={2014},

  volume={60},

  number={11},

  pages={6979-6987},

  doi={10.1109/TIT.2014.2351404}}

@ARTICLE{Tamo2014,

  author={Tamo, Itzhak and Barg, Alexander},

  journal={IEEE Transactions on Information Theory}, 

  title={A Family of Optimal Locally Recoverable Codes}, 

  year={2014},

  volume={60},

  number={8},

  pages={4661-4676},

  doi={10.1109/TIT.2014.2321280}}

@ARTICLE{Kamath2014,

  author={Kamath, Govinda M. and Prakash, N. and Lalitha, V. and Kumar, P. Vijay},

  journal={IEEE Transactions on Information Theory}, 

  title={Codes With Local Regeneration and Erasure Correction}, 

  year={2014},

  volume={60},

  number={8},

  pages={4637-4660},

  doi={10.1109/TIT.2014.2329872}}

@inproceedings{Gopalan2017,
  author       = {Parikshit Gopalan and
                  Guangda Hu and
                  Swastik Kopparty and
                  Shubhangi Saraf and
                  Carol Wang and
                  Sergey Yekhanin},
  editor       = {Philip N. Klein},
  title        = {Maximally Recoverable Codes for Grid-like Topologies},
  booktitle    = {Proceedings of the Twenty-Eighth Annual {ACM-SIAM} Symposium on Discrete
                  Algorithms, {SODA} 2017, Barcelona, Spain, Hotel Porta Fira, January
                  16-19},
  pages        = {2092--2108},
  publisher    = {{SIAM}},
  year         = {2017},
  doi          = {10.1137/1.9781611974782.136},
}

@inproceedings{NazirkhanovaNT22,
  author       = {Kamilla Nazirkhanova and
                  Joachim Neu and
                  David Tse},
  editor       = {Maurice Herlihy and
                  Neha Narula},
  title        = {Information Dispersal with Provable Retrievability for Rollups},
  booktitle    = {Proceedings of the 4th {ACM} Conference on Advances in Financial Technologies,
                  {AFT} 2022, Cambridge, MA, USA, September 19-21, 2022},
  pages        = {180--197},
  publisher    = {{ACM}},
  year         = {2022},
  url          = {https://doi.org/10.1145/3558535.3559778},
  doi          = {10.1145/3558535.3559778},
  bibsource    = {dblp computer science bibliography, https://dblp.org}
}

@article{TamoBF16,
  author       = {Itzhak Tamo and
                  Alexander Barg and
                  Alexey A. Frolov},
  title        = {Bounds on the Parameters of Locally Recoverable Codes},
  journal      = {{IEEE} Trans. Inf. Theory},
  volume       = {62},
  number       = {6},
  pages        = {3070--3083},
  year         = {2016},
  url          = {https://doi.org/10.1109/TIT.2016.2518663},
  doi          = {10.1109/TIT.2016.2518663},
  bibsource    = {dblp computer science bibliography, https://dblp.org}
}

@article{Brakensiek2025,
  author       = {Joshua Brakensiek and
                  Manik Dhar and
                  Sivakanth Gopi},
  title        = {Improved Constructions and Lower Bounds for Maximally Recoverable
                  Grid Codes},
  journal      = {CoRR},
  volume       = {abs/2509.15013},
  year         = {2025},
}

@article{Kong2021,
  title        = {New bounds on the field size for maximally recoverable codes instantiating grid-like topologies},
  author       = {Xiangliang Kong and Jingxue Ma and Gennian Ge},
  journal      = {Journal of Algebraic Combinatorics},
  volume       = {54},
  number       = {2},
  pages        = {529--557},
  year         = {2021},
  doi          = {10.1007/s10801-021-01013-1},
}

@book{LN,
  title     = {Finite Fields},
  author    = {Rudolf Lidl and Harald Niederreiter},
  edition   = {2},
  publisher = {Cambridge University Press},
  year      = {1997},
  series    = {Encyclopedia of Mathematics and Its Applications},
  volume    = {20},
  isbn      = {9780521392310},
  address   = {Cambridge, UK}
}

@book{roth,
  title     = {Introduction to Coding Theory},
  author    = {Ron M. Roth},
  publisher = {Cambridge University Press},
  year      = {2006},
  address   = {Cambridge, UK},
  isbn      = {9780521845045},
  pages     = {566},
}

@article{AlBassam2018,
  title        = {Fraud and Data Availability Proofs: Maximising Light Client Security and Scaling Blockchains with Dishonest Majorities},
  author       = {Mustafa Al-Bassam and Alberto Sonnino and Vitalik Buterin},
  journal      = {arXiv preprint},
  archivePrefix= {arXiv},
  eprint       = {1809.09044},
  year         = {2018},
  primaryClass = {cs.CR},
  url          = {https://arxiv.org/abs/1809.09044}
}

@article{forneyGMD,
  author =  {Forney, Jr., G. David},
  title   = {Generalized Minimum Distance Decoding},
  journal = {IEEE Transactions on Information Theory},
  volume  = {12},
  number  = {2},
  pages   = {125--131},
  year    = {1966},
  doi     = {10.1109/TIT.1966.1053874}
}

@INPROCEEDINGS{WZ14,
  author={Wang, Anyu and Zhang, Zhifang},
  booktitle={2014 IEEE International Symposium on Information Theory}, 
  title={Repair locality from a combinatorial perspective}, 
  year={2014},
  pages={1972-1976},
}

\end{document}